\theoremstyle{plain}
\newtheorem{thm}{Theorem}[section]
\newtheorem*{thm*}{Theorem}
\newtheorem{lem}[thm]{Lemma}
\newtheorem{prop}[thm]{Proposition}
\theoremstyle{definition}
\theoremstyle{remark}
\newtheorem*{rem}{Remark}
\newtheorem*{con*}{Conjecture}
\newtheorem*{example*}{Example}
\title{A coding theoretic study of homogeneous Markovian predictive games}
\author{Takara Nomura and Akio Fujiwara%
\thanks{fujiwara@math.sci.osaka-u.ac.jp}\\
{Department of Mathematics, Osaka University}\\ 
{Toyonaka, Osaka 560-0043, Japan}
}
\date{}
\begin{document} 
\maketitle

\begin{abstract}
This paper explores a predictive game in which a Forecaster announces odds based on a time-homogeneous Markov kernel, establishing a game-theoretic law of large numbers for the relative frequencies of occurrences of all finite strings. A key feature of our proof is a betting strategy built on a universal coding scheme, inspired by the martingale convergence theorem and algorithmic randomness theory, without relying on a diversified betting approach that involves countably many operating accounts. We apply these insights to thermodynamics, offering a game-theoretic perspective on Le\'o Szil\'ard's thought experiment.

\bigskip\noindent
{\bf Keywords}: game-theoretic probability, martingale, universal coding, Szil\'ard's engine, entropy

\end{abstract}

\section{Introduction}\label{sec:Introduction}

Game-theoretic probability theory \cite{itsonlyagame}, proposed by Shafer and Vovk in 2001, offers a framework for studying stochastic behavior without relying on traditional concept of probability.
To illustrate this approach, we begin by recalling a fundamental result from game-theoretic probability theory.

Let $\Omega := \{1, 2, \dots,A\}$ be a finite alphabet, and let $\Omega^n$, $\Omega^\ast$, and $\Omega^\infty$ denote the sets of sequences over $\Omega$ of length $n$, finite length, and  infinite (one-sided) length, respectively.
The empty string is denoted by $\lambda$.
An element of $\Omega^n$ is represented symbolically as $\omega^n$.
We also introduce the notation $\omega_i^j := \omega_i \omega_{i+1} \cdots \omega_j$ for $i\le j$, denoting the substring from the $i$th to the $j$th coordinates of a longer sequence $\omega_1\omega_2 \cdots \omega_n\cdots$.
By convention, if $i>j$, we set $\omega_i^j:= \lambda$.
For $x\in\Omega^*$ and $y\in\Omega^*\cup\Omega^\infty$, we write $x \sqsubset y$ to indicate that $x$ is a prefix of $y$.

Let us introduce the following set: 
\[
\mathcal{P}(\Omega):=\left\{ p:\Omega\to (0,1) \left| \; \sum_{\omega\in\Omega} p(\omega)=1 \right.\right\}.
\]
Given a $p\in\mathcal{P}(\Omega)$, consider the following game, where $\delta_{\omega_n}$ denotes the Kronecker delta, defined as $\delta_{\omega_n}(a)=1$ if $\omega_n=a$ and $\delta_{\omega_n}(a)=0$ otherwise.

\medskip
\begin{itembox}[l]{\bf Simple predictive game} \label{game:predictive_by_Shafer_and_Vovk}
\textbf{Players} : Skeptic and Reality. \\
\textbf{Protocol} : $K_0 = 1$. \\
\ \ FOR $n \in \mathbb{Z}_{>0}$ : \\
\ \ \ \ \ Skeptic announces $\beta_n \in \mathbb{R}^\Omega$. \\
\ \ \ \ \ Reality announces $\omega_n \in \Omega$. \\
\ \ \ \ \ $\displaystyle K_n := K_{n-1} +  \sum_{a \in \Omega}^{ } \beta_n (a) (\delta_{\omega_n} (a) - p(a))$. \\
\ \ END FOR.
\end{itembox}

\medskip
This protocol can be understood as a betting game in which Skeptic predicts Reality's ``stochastic'' move, regarding $p(a)$ as the ``probability'' of occurring $a\in\Omega$.
Further, $\beta_n$ and $K_n$ denote Skeptic's bet and capital at step $n$, respectively, with the recursion formula designating how the capital evolves.
Specifically, suppose that, at step $n$, Skeptic announces $\beta_n=(\beta_n(1), \beta_n(2),\dots, \beta_n(A))$ and Reality  announces $\omega_n=b\in\Omega$. 
Then, Skeptic obtains $\beta_n(b)(1-p(b))$ and loses $\sum_{a\neq b} \beta_n(a)p(a)$ in assets.
Note that $\beta_n$ can take negative values.
Since $\beta_n$ can depend on Reality's past move $\omega_1^{n-1}$, we identify Skeptic's strategy $\{\beta_n\}_n$ with a map $\beta: \Omega^\ast \rightarrow \mathbb{R}^\Omega$ as $\beta_n(a) := (\beta(\omega_1^{n-1})) (a).$

Apparently, this game is in favor of Reality because Reality announces $\omega_n$ {\it after} knowing Skeptic's bet $\beta_n$, preventing Skeptic from becoming rich.
However, Shafer and Vovk showed the following surprising result.

\begin{thm} [Game-theoretic law of large numbers] \label{SLLN:predictive-Shafer_and_Vovk}
In the simple predictive game, Skeptic has a prudent strategy $\beta : \Omega^\ast \rightarrow \mathbb{R}^\Omega$ that ensures $\lim_{n \to \infty} K_n = \infty$ unless
\[ 
\lim_{n \to \infty} \frac{1}{n}\sum_{i=1}^n \delta_a (\omega_i) = p(a)
\]
for all $a \in \Omega$.
Here, a strategy is called prudent if $K_n > 0$ for all $n\in\mathbb{Z}_{>0}$ and every sequence $\omega_1^n \in \Omega^n$ chosen by Reality.
\end{thm}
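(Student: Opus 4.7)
The plan is to reduce the theorem, by way of countable convex combinations, to forcing a single deviation event at a single rational tolerance, and then to handle each such event with a Kelly-type martingale promoted to a convergent capital process by a stopping-time mixture.

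Set $m_n^{(a)} := \tfrac{1}{n}\sum_{i=1}^n \delta_a(\omega_i)$ and
\[
E_{a,\epsilon} := \{\omega \in \Omega^\infty : \exists N,\ \forall n\geq N,\ |m_n^{(a)}(\omega) - p(a)| \leq \epsilon\},
\]
so that the target event is the countable intersection $\bigcap_{a\in\Omega,\,\epsilon\in\mathbb{Q}_{>0}} E_{a,\epsilon}$. Prudent strategies are closed under countable convex combinations: given prudent strategies $\beta^{(j)}$ each with $K_0^{(j)}=1$, the combination $\sum_j 2^{-j}\beta^{(j)}$ is prudent with initial capital $1$, and whenever $K_n^{(j)}\to\infty$ for some single $j$, the combined capital diverges as well. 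Hence it suffices to produce, for each pair $(a,\epsilon)$, a prudent strategy whose capital tends to infinity on the complement $E_{a,\epsilon}^c$.

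Fix $(a,\epsilon)$. For any $c\in(-1/(1-p(a)),\,1/p(a))$, the Kelly-type strategy $\beta_n(a)=c\,K_{n-1}^{(c)}$, $\beta_n(a')=0$ for $a'\neq a$, yields $K_n^{(c)}=\prod_{i=1}^n(1+c(\delta_a(\omega_i)-p(a)))>0$, hence is prudent, and
\[
\tfrac{1}{n}\log K_n^{(c)} = h_c(m_n^{(a)}),\qquad h_c(x) := x\log(1+c(1-p(a))) + (1-x)\log(1-cp(a)).
\]
The affine function $h_c$ has slope of the same sign as $c$. Taylor expansion at $c=0$ gives $h_c(p(a))=-\tfrac{c^2}{2}p(a)(1-p(a))+O(c^3)$ and slope $c+O(c^2)$, so there exist small $c_+>0$ and $c_-<0$ with $h_{c_+}(p(a)+\epsilon)>0$ and $h_{c_-}(p(a)-\epsilon)>0$. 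Put $K_n^*:=\tfrac12 K_n^{(c_+)}+\tfrac12 K_n^{(c_-)}$; on $E_{a,\epsilon}^c$ at least one of $m_n^{(a)}>p(a)+\epsilon$ and $m_n^{(a)}<p(a)-\epsilon$ holds infinitely often, whence $\limsup_n K_n^*=\infty$.

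The main obstacle is to promote $\limsup K_n^*=\infty$ to $\lim K_n=\infty$. I would apply the classical stopping-time mixture: for $k\geq 1$ let $\tau_k:=\inf\{n:K_n^*\geq 2^k\}$ and $\tilde K_n^{(k)}:=K^*_{n\wedge\tau_k}$, which amounts to running $\beta^*$ until the capital first reaches level $2^k$ and then setting every bet to $0$; each $\tilde K^{(k)}$ is again prudent. Choose weights $w_k>0$ with $\sum_k w_k=1$ and $\sum_k w_k 2^k=\infty$ (e.g.\ $w_k\propto 1/k^2$), and form $\overline K^{(a,\epsilon)}_n:=\sum_k w_k\tilde K_n^{(k)}$. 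On $E_{a,\epsilon}^c$ every $\tau_k$ is finite, so for each $k$ one eventually has $\tilde K_n^{(k)}\geq 2^k$, and consequently $\overline K^{(a,\epsilon)}_n\geq\sum_{k:\tau_k\leq n}w_k 2^k\to\infty$ by monotone convergence. Combining the strategies $\beta^{(a,\epsilon)}$ over the countably many pairs $(a,\epsilon)$ via the convex device of the first paragraph then produces the single prudent strategy required by the theorem.
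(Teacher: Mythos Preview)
Your proof is correct, but it follows precisely the ``diversified betting approach using countably many operating accounts'' that the paper explicitly sets out to avoid (see the discussion after Theorem~1.1 in the Introduction). The paper's alternative proof, given in Appendix~C, constructs a \emph{single} explicit strategy from the Lynch--Davisson universal code: setting
\[
\alpha_n(a)=\frac{S_{n-1}(a)+1}{p(a)\,(n+A-1)},
\]
one obtains $K_n \propto Q_{LD}(\omega_1^n)/P(\omega_1^n)$ for an explicit universal distribution $Q_{LD}$, and Stirling's formula yields
\[
\log K_n = n\bigl(D(\hat P_{\omega_1^n}\,\|\,p) - O(\tfrac{\log n}{n})\bigr),
\]
so $\limsup_n K_n=\infty$ whenever $\hat P_{\omega_1^n}\not\to p$. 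Both arguments finish with a $\limsup\to\lim$ upgrade; yours does it via the stopping-time mixture, the paper via the same mechanism packaged as Lemma~3.1.

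What each route buys: your construction is modular and uses only Kelly betting, stopping, and convex mixing---no coding theory---but at the cost of an infinite family of sub-strategies indexed by $(a,\epsilon,k)$, which is exactly the feature the paper regards as inelegant. The paper's coding-theoretic route gives a closed-form one-account strategy (essentially Laplace's add-one rule), makes the capital a likelihood ratio against a universal code, and as a by-product extracts quantitative convergence-rate information (the Fisher-information remark following the appendix proof). So your argument is a valid independent proof, but it reproduces the classical Shafer--Vovk method rather than the paper's intended contribution.
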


The theorem implies that there exists a betting strategy $\beta_n$ that guarantees Skeptic becomes infinitely rich if Reality's moves deviate from the ``law of large numbers,'' all while avoiding the risk of bankruptcy.

After Shafer and Vovk's original proof, which relies on a diversified betting approach using countably many operating accounts \cite{itsonlyagame}, several alternative proofs have been proposed \cite{Kumon-Takemura-Takeuchi, Takeuchi-Kumon-Takemura}.
The objective of this paper is to elucidate the coding theoretic aspect underlying this protocol by providing an alternative proof of Theorem \ref{SLLN:predictive-Shafer_and_Vovk} and its generalizations.

In order to explicate our approach, consider the following generalized game in which a Forecaster comes into play to announce a possibly ``non-i.i.d.'' process.

\medskip
\begin{itembox}[l]{\bf Generalized predictive game} \label{game:predictive} 
\textbf{Players} : Forecaster, Skeptic, and Reality. \\
\textbf{Protocol} : $K_0 = 1$. \\
\ \ FOR $n \in \mathbb{Z}_{>0}$ : \\
\ \ \ \ \ Forecaster announces $p_n \in \mathcal{P}(\Omega)$. \\
 \ \ \ \ \ Skeptic announces $\beta_n \in \mathbb{R}^\Omega$. \\
\ \ \ \ \ Reality announces $\omega_n \in \Omega$. \\
\ \ \ \ \ $\displaystyle K_n := K_{n-1} +  \sum_{a \in \Omega}^{ } \beta_n (a) (\delta_{\omega_n} (a) - p_n(a))$. \\
\ \ END FOR.
\end{itembox}

\medskip
Let us identify Skeptic's betting strategy $\beta_n\in\mathbb{R}^\Omega$ with $\alpha_n\in\mathbb{R}^\Omega$ that satisfies
\[ \beta_n(a)=K_{n-1} \cdot \alpha_n(a),\qquad (a\in\Omega). \]
Then, the recursion formula for the capital is rewritten as
\begin{equation}\label{eqn:recursionKn}
 K_n = K_{n-1}\left\{ 1 +  \sum_{a \in \Omega} \alpha_n (a) (\delta_{\omega_n} (a) - p_n(a)) \right\}.
\end{equation}
We shall call $\alpha_n$ a betting strategy as well, and call it prudent if the corresponding $\beta_n$ is prudent.
We also identify $\{\alpha_n\}_n$ with a map $\alpha:\Omega^\ast \rightarrow \mathbb{R}^\Omega$ as $\alpha_n(a) = (\alpha(\omega_1^{n-1}))(a)$.

Now, associated with a prudent strategy $\alpha_n$ is the following quantity: 
\begin{equation}\label{eqn:probQ}
 Q(\omega \mid \omega_1^{n-1})
 := \left\{ 1 + \sum_{a \in \Omega} \alpha_n(a) (\delta_\omega (a) - p_n(a)) \right\}p_n(\omega).
\end{equation}
Since $\alpha_n$ is prudent, we see that $Q(\omega \mid \omega_1^{n-1}) > 0$ for all $\omega\in\Omega$. Moreover,
\begin{align*}
\sum_{\omega \in \Omega} Q(\omega \mid \omega_1^{n-1})
&= \sum_{\omega \in \Omega} \left\{ 1 + \sum_{a \in \Omega} \alpha_n(a) (\delta_\omega (a) - p_n(a)) \right\}p_n(\omega) \\
&= 1 + \sum_{a \in \Omega} \alpha_n(a) \left\{ \sum_{\omega \in \Omega} (\delta_\omega(a) - p_n(a)) p_n(\omega) \right\} \\
&= 1 + \sum_{a \in \Omega} \alpha_n(a) (p_n(a) - p_n(a)) = 1.
\end{align*}
Therefore, the quantity $Q(\omega \mid \omega_1^{n-1})$ defined by \eqref{eqn:probQ} can be regarded as a conditional probability. 
Conversely, for any conditional probability $Q(\omega \mid \omega_1^{n-1})$, 
there exists a prudent strategy $\alpha_n$ (although not unique) that satisfies \eqref{eqn:probQ}: 
for instance, let $\alpha_n(a) := Q(a \mid \omega_1^{n-1}) / p_n(a)$ for each $a \in \Omega$.
Thus, the role of Skeptic in the above predictive game is regarded as announcing a conditional probability $Q(\omega \mid \omega_1^{n-1})$.

Now, suppose that Forecaster happens to have a predetermined probability measure $P$ on $(\Omega^\infty, \mathscr{F})$, with $\mathscr{F}:=\sigma(\{\Gamma_x\}_{x\in\Omega^*})$ being the $\sigma$-algebra generated by the cylinder sets $\Gamma_x := \{y \in \Omega^\infty : x \sqsubset y\}$,
and announces each function $p_n$ as the conditional probability, given the past data $\omega_1^{n-1}$, as follows:
\[ 
p_n(a) := P(a \mid \omega_1^{n-1}) :=P(a \mid \Gamma_{\omega_1^{n-1}}), \quad (a\in\Omega).
\]
Then, we have from \eqref{eqn:recursionKn} and \eqref{eqn:probQ} that
\begin{equation*}
\frac{K_n}{K_{n-1}}
= 1 +  \sum_{a \in \Omega} \alpha_n (a) (\delta_{\omega_n} (a) - p_n(a)) 
= \frac{Q(\omega_n \mid \omega_1^{n-1})}{P(\omega_n \mid \omega_1^{n-1})},
\end{equation*}
and hence
\begin{equation}\label{eqn:capital}
K_n = K_0 \, \prod_{i=1}^n \frac{K_i}{K_{i-1}}=\frac{Q(\omega_1^n)}{P(\omega_1^n)}.
\end{equation}
Put differently, the capital process $K_n$ is nothing but the likelihood ratio process between $P$ and $Q$. 

Let $\mathscr{F}_n := \sigma(\{\Gamma_{x^n}\}_{x^n \in \Omega^n})$ for each $n\in \mathbb{Z}_{>0}$,
Then the capital process \eqref{eqn:capital} is a $P$-martingale relative to the natural filtration $\{\mathscr{F}_n\}_n$, in that
\begin{eqnarray*}
\mathbb{E} \left[\left. \frac{Q(\omega_1^n)}{P(\omega_1^n)}\, \right| \mathscr{F}_{n-1} \right]
= \sum_{\omega_n \in \Omega} \frac{Q(\omega_1^n)}{P(\omega_1^n)} P(\omega_n \mid \omega_1^{n-1}) 
= \sum_{\omega_n \in \Omega} \frac{Q(\omega_1^n)}{P(\omega_1^{n-1})}
= \frac{Q(\omega_1^{n-1})}{P(\omega_1^{n-1})}.
\end{eqnarray*}
It then follows from the martingale converge theorem \cite{prob-with-martingales} that the capital process $K_n$ converges almost surely to some nonnegative value under the probability measure $P$. 
Game-theoretic probability provides a reciprocal description of this mechanism by asserting that $K_n$ diverges ``on a null set with respect to $P$,'' which, in the context of predictive games, is interpreted as ``if Reality does not align with Forecaster $P$.''

Furthermore, due to \eqref{eqn:capital}, the logarithm of the capital process is written as 
\begin{equation}\label{eqn:logKn}
\log_A K_n=\left\{-\log_A P(\omega_1^n)\right\}-\left\{-\log_A Q(\omega_1^n)\right\}.
\end{equation}
This is simply the difference between the Shannon codelengths for $P$ and $Q$. 
In other words, designing a betting strategy $\alpha_n$ via the conditional probability \eqref{eqn:probQ} is equivalent to designing a coding scheme for a given data $\omega_1^{n-1}$ via $Q$.

Note that the quantity \eqref{eqn:logKn} reminds us of the randomness deficiency in algorithmic randomness theory \cite{LiVitanyi}. 
For a computable probability measure $P$ on $\Omega^\infty$, an infinite sequence $\omega_1^\infty \in \Omega^\infty$ is Martin-L\"{o}f $P$-random if and only if the sequence
\begin{equation}\label{eqn:KC}
-\log_A P(\omega_1^n) - \mathcal{K}(\omega_1^n)
\end{equation}
of randomness deficiencies is bounded from above, where $\mathcal{K}(\omega_1^n)$ is the prefix Kolmogorov complexity of $\omega_1^n \in \Omega^\ast$.
Obviously, \eqref{eqn:logKn} and \eqref{eqn:KC} are similar in forms, as they are differences of  codelengths.
It is, however, crucial to observe that \eqref{eqn:KC} contains an uncomputable quantity $ \mathcal{K}(\omega_1^n)$. 
This observation prompts us to design the conditional probability $Q$ in \eqref{eqn:probQ}, or equivalently a betting strategy $\alpha_n$ in the generalized predictive game, in terms of a computable universal coding scheme.
This motivates our approach in the present study.

In this paper, we treat the following predictive game, which we shall call a {\it time-homogeneous Markovian predictive game}%
\footnote{
The Markovian predictive game introduced in this paper is entirely distinct from the Markov game commonly used in the field of operations research \cite{Solan}.
}.
Suppose Forecaster has a $k$th-order Markov kernel $M : \Omega\times \Omega^k \to (0,1): (a, \omega^k) \mapsto M(a \mid \omega^k)$ that satisfies
\[ \sum_{a \in \Omega} M(a \mid \omega^k) = 1 \]
for all $\omega^k \in \Omega^k$.

\begin{itembox}[l]{\bf Time-homogeneous $k$th-order Markovian predictive game} \label{game:Markov}
\textbf{Players} : Forecaster, Skeptic, and Reality. \\
\textbf{Protocol} : $K_0 = 1$.\\
\ \ FOR $n \in \mathbb{Z}_{>0}$ : \\
\ \ \ \ \ Forecaster announces $p_n\in\mathcal{P}(\Omega)$ such that 
	 $\displaystyle p_n(a):=M(a \mid \omega_{n-k}^{n-1})$ for $n>k$.  \\
\ \ \ \ \ Skeptic announces $\alpha_n \in \mathbb{R}^\Omega$. \\
\ \ \ \ \ Reality announces $\omega_n \in \Omega$. \\
\ \ \ \ \ $\displaystyle K_n := K_{n-1} \left\{ 1 +  \sum_{a \in \Omega} \alpha_n (a) (\delta_{\omega_n} (a) - p_n(a)) \right\}$. \\
\ \ END FOR.
\end{itembox}

This protocol can be understood as a variant of the generalized predictive game in which Forecaster announces $p_n :\Omega\to (0,1)$ for $n>k$ according to a time-homogeneous Markov kernel $M$ as $p_n(a)=M(a \mid \omega_{n-k}^{n-1})$ based on Reality's past moves.

For each $\omega^\ell\in\Omega^\ell$ with $\ell \in \mathbb{Z}_{\geq k}$, let $P(\omega^\ell)$ be defined by
\[
P(\omega^\ell):=\pi(\omega^k) \prod_{i=k+1}^\ell M(\omega_i \mid \omega^{i-1}_{i-k}),
\]
where $\pi:\Omega^k\to (0,1)$ is the stationary distribution associated with the Markov kernel $M$.
The main result of this paper is the following:

\begin{thm}\label{thm:kth_Markov_normal_number}
In the time-homogeneous $k$th-order Markovian predictive game,
Skeptic has a prudent strategy $\alpha : \Omega^\ast \rightarrow \mathbb{R}^\Omega$ that ensures $\lim_{n \to \infty} K_n = \infty$ unless
\[ 
\lim_{n \to \infty} \frac{S_n(a^\ell)}{n} = P(a^\ell)
\]
for all $\ell\in \mathbb{Z}_{\geq k}$ and $a^\ell \in\Omega^\ell$,
where $S_n(a^\ell)$ denotes the number of occurrences of $a^\ell$ in $\omega^n$.
\end{thm}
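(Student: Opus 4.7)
I mirror the measure-theoretic identity \eqref{eqn:capital} by designing a single conditional probability $Q$ on $\Omega^\infty$ from a universal coding scheme for Markov sources of every finite order and letting $\alpha_n$ be the associated betting strategy via \eqref{eqn:probQ}. Because Forecaster is pinned to $p_i = M(\cdot\mid\omega_{i-k}^{i-1})$ for $i > k$ and is free only on the $k$ initial moves, the capital $K_n$ equals $Q(\omega_1^n)/P(\omega_1^n)$ up to a bounded factor depending only on $\omega_1^k$, so it suffices to show that this likelihood ratio diverges along every path whose $\ell$-tuple frequencies fail to match $P(a^\ell)$.

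\textbf{Construction.} For each $\ell \ge k$ introduce the Krichevsky--Trofimov per-context predictor
\[
Q^{(\ell)}(a \mid \omega_1^{n-1}) := \frac{N_\ell(\omega_{n-\ell}^{n-1}, a) + 1/2}{N_\ell(\omega_{n-\ell}^{n-1}) + A/2}, \qquad n > \ell,
\]
with any fixed positive default for $n \le \ell$, where $N_\ell(s, a)$ counts occurrences of $sa$ in $\omega_1^{n-1}$. The standard KT estimate yields, uniformly over every $\ell$-th order Markov measure $P_\ell$,
\[
\log_A Q^{(\ell)}(\omega_1^n) \ge \log_A P_\ell(\omega_1^n) - C_\ell \log n - c_\ell, \qquad C_\ell = \tfrac{A^\ell (A-1)}{2}.
\]
Fix summable positive weights $w_\ell$ (say $w_\ell = 2^{k-\ell-1}$) and set $Q := \sum_{\ell \ge k} w_\ell\, Q^{(\ell)}$. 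Then $Q$ is a probability measure on $\Omega^\infty$, the $\alpha_n$ induced by $Q$ through \eqref{eqn:probQ} is automatically prudent, and $Q(\omega_1^n) \ge w_\ell\, Q^{(\ell)}(\omega_1^n)$ for every $\ell$.

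\textbf{Detection of deviations.} Since $P$ is $k$-th order it also lies in the $\ell$-th order Markov model for every $\ell \ge k$, so the MLE $\hat P^{(\ell)}$ in that wider class satisfies $\log \hat P^{(\ell)} \ge \log P$; applying the KT bound with $P_\ell = \hat P^{(\ell)}$ gives
\[
\log_A K_n \ge \sum_{t \in \Omega^\ell} N_\ell^{-}(t)\, D\!\left(\hat M_\ell(\cdot \mid t)\,\|\, M(\cdot \mid t_{\ell-k+1}^{\ell})\right) - C_\ell \log n + O(1),
\]
with $\hat M_\ell(a \mid t) = N_\ell(t, a) / N_\ell^{-}(t)$. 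Suppose the conclusion fails, so $S_n(t_0)/n \not\to P(t_0)$ for some $\ell \ge k$ and $t_0 \in \Omega^\ell$. Extract a subsequence on which $S_n(t_0)/n$ stays bounded away from $P(t_0)$, and refine so that both $\hat\pi_\ell(t) := N_\ell^{-}(t)/n$ and $\hat M_\ell$ converge to limits $\hat\pi_{\ell,\infty}$ and $\hat M_{\ell,\infty}$; then $\hat\pi_{\ell,\infty} \ne P_\ell$. The approximate empirical stationarity
\[
\hat\pi_\ell(t) = \sum_{c \in \Omega} \hat\pi_\ell(c \cdot t_1^{\ell-1})\, \hat M_\ell(t_\ell \mid c \cdot t_1^{\ell-1}) + O(1/n)
\]
passes to the limit and makes $\hat\pi_{\ell,\infty}$ stationary for $\hat M_{\ell,\infty}$. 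Uniqueness of $P_\ell$ as the invariant distribution of the irreducible strictly positive kernel $M$ then forces $\hat M_{\ell,\infty}(\cdot \mid s) \ne M(\cdot \mid s_{\ell-k+1}^{\ell})$ for some context $s$ with $\hat\pi_{\ell,\infty}(s) > 0$, so the $D$-summand for $s$ grows linearly in $n$ along the subsequence and $\limsup_n K_n = \infty$.

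\textbf{Upgrade and main obstacle.} To turn $\limsup K_n = \infty$ into $\lim K_n = \infty$ on the same event I apply the standard locking device: run countably many copies of the above strategy, stop the $j$-th the first time its capital hits $2^j$ (bet nothing thereafter, so that $2^{j-1}$ is locked in), and mix the stopped strategies additively with weights $2^{-j}$. On any bad path every stopping time $T_j$ is eventually finite, and the mixed capital dominates $\#\{j : T_j \le n\} \to \infty$. The main technical hurdle I anticipate is the detection step, specifically ruling out the degenerate case in which the deviating tuple has vanishing empirical mass; there one must invoke the empirical-stationarity relation and the uniqueness of the invariant distribution for the irreducible kernel $M > 0$ to produce a different context on which $\hat M_\ell$ fails to converge to $M$. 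The KT universality bound, the weighted mixture, and the locking device are all classical.
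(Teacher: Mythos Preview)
Your proof is correct in outline and takes a genuinely different route from the paper's. The paper builds a \emph{single} Lempel--Ziv betting strategy $Q_{LZ}$ via incremental parsing and proves the theorem by induction on $\ell$: the base case $\ell=k$ is handled by an explicit ``zero-mass propagation'' argument showing that if $\sum_t (T_n(t)/n)\,D(\hat M_n^k(\cdot\mid t)\,\|\,M(\cdot\mid t))\to 0$ while some $\hat M_n^k(b\mid a^k)$ fails to converge, one can successively shift the offending context until \emph{every} $T_n(d^k)/n$ is driven to $0$ along a subsequence, contradicting $\sum_{d^k} T_n(d^k)=n$; the inductive step then has positive mass guaranteed by hypothesis. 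You instead use a countable Krichevsky--Trofimov mixture $Q=\sum_\ell w_\ell Q^{(\ell)}$, treat each $\ell$ directly via the $O(A^\ell\log n)$ redundancy bound, and replace the propagation argument by a compactness step: any subsequential limit $(\hat\pi_{\ell,\infty},\hat M_{\ell,\infty})$ satisfies exact stationarity, so if $\hat M_{\ell,\infty}$ agreed with $M$ on $\operatorname{supp}\hat\pi_{\ell,\infty}$ then $\hat\pi_{\ell,\infty}$ would itself be $M$-stationary and hence equal to $P_\ell$ by irreducibility, a contradiction. Both detections are valid; yours is tidier and uniform in $\ell$. The principal conceptual difference is that the paper's declared goal (see its abstract) is to avoid ``a diversified betting approach that involves countably many operating accounts,'' and both your KT mixture and your parallel locking upgrade are exactly such countable convex combinations of capital processes---the paper's upgrade is a single sequential restart-and-store instead. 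So your argument works, but it forfeits the feature the paper regards as its main contribution.
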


Theorem \ref{thm:kth_Markov_normal_number} establishes that there exists a betting strategy that guarantees Skeptic can become infinitely rich if Reality's moves do not align with the Markovian Forecaster's announcements. Specifically, this happens when the relative frequency of occurrences of some string of length $\ell\;(\ge k)$ fails to converge to the stationary joint distribution associated with the $k$th-order Markov kernel.

This paper is organized as follows.
In Section 2, we introduce a betting strategy with an emphasis on its universal coding theoretic aspect and present several lemmas to lay the groundwork for proving the main result.
For improved readability, all proofs of lemmas are deferred to Appendix \ref{app:Lemmas}.
In Section 3, we present a proof of Theorem \ref{thm:kth_Markov_normal_number}.
Section 4 explores applications of Theorem \ref{thm:kth_Markov_normal_number} to thermodynamics, specifically a game-theoretic interpretation of Szil\'ard's engine and a discussion of entropy in predictive games. 
Section 5 provides concluding remarks.
For the reader's convenience, additional information on stationary distributions of Markov chains and an alternative proof of Theorem \ref{SLLN:predictive-Shafer_and_Vovk} are provided in Appendices \ref{app:stationary} and \ref{app:Lynch-Davisson}, respectively.

\section{Preliminaries}

In this section, we develop a betting strategy using the technique of incremental parsing and establish several lemmas in preparation for the proof of Theorem \ref{thm:kth_Markov_normal_number}. 

\subsection{Betting strategy inspired by Lempel-Ziv coding scheme}

We outline an algorithm for incremental parsing \cite{LZ-78}, which divides a string into substrings separated by slashes, with each substring being the shortest one not previously encountered.
The algorithm runs as follows:
Start with an initial slash.
After each slash, scan the input sequence until the shortest string that has not yet been marked off is identified.
Since this string is the shortest unseen string, all its prefixes must have appeared earlier in the sequence.
For example, a sequence $1000011101011$ of length 13 is decomposed into 
\[ /1/0/00/01/11/010/11. \]
Suppose a sequence $\omega_1^n$ is parsed as
\[
\omega_1^n = /\omega_{n_0 + 1}^{n_1}/ \omega_{n_1 + 1}^{n_2}/ \cdots /\omega_{n_{T-1} +1}^{n_T} /\omega_{n_T +1}^n,
\]
where $n_0 = 0$, $n_1 = 1$, and all parsed substrings expect the last one, $\omega_{n_T +1}^n$, are distinct.
In what follows, the substrings $\omega_{n_0 + 1}^{n_1}, \omega_{n_1 + 1}^{n_2}, \dots, \omega_{n_{T-1} +1}^{n_T}$ are referred to as parsed phrases. 
Note that the number $T$ of parsed phrases depends on the sequence $\omega_1^n$, 
and the last string $\omega_{n_T+1}^n$ may be empty.

We now construct a betting strategy at step $n$, given Reality's past moves $\omega_1^{n-1}$ ($n \geq 2$).
Using incremental parsing, we decompose $\omega_1^{n-1}$ into 
\[ 
\omega_1^{n-1} =/ \omega_{n_0 + 1}^{n_1} /\omega_{n_1 + 1}^{n_2} /\cdots /\omega_{n_{T-1} +1}^{n_T}/ \omega_{n_T +1}^{n-1}.
\]
Next, we define the set
\begin{equation*}
V(\omega_1^{n_T}) := 
\left\{ \xi \in \Omega^\ast \middle|
  \begin{alignedat}{2}
  \; &\xi \neq \omega_{n_{j-1} + 1}^{n_j} \,\text{ for all } j\in \{0, 1, \dots, T\}, \,\text{and} &\\
  & \xi =   \omega_{n_{j-1} + 1}^{n_j} b \,\text{ for some } j\in \{0, 1, \dots, T\} \,\text{and}\, b \in \Omega &
  \end{alignedat}
\right\},
\end{equation*}
where $\omega_{n_{-1}+1}^{n_0} = \lambda$ is the empty string.
The set $V(\omega_1^{n_T})$ is a prefix set containing all potential phrases that can be the next parsed phrase following $\omega_{n_{T-1} +1}^{n_T}$.
The size of this set is given by 
\[ |V(\omega_1^{n_T})| = A + T(A-1). \] 
This can be shown by induction on $T$: 
For $T = 0$, we have $n= 1$, and
\begin{equation*}
V(\omega_1^{n-1}) = V(\lambda) = \Omega.
\end{equation*}
For $T \geq 1$, the set $V(\omega_1^{n_T})$ is constructed as
\begin{eqnarray*}
V(\omega_1^{n_T})
= \left( V(\omega_1^{n_{T-1}}) \setminus \{\omega_{n_{T-1} +1}^{n_T}\} \right) \cup \{\omega_{n_{T-1}+1}^{n_T} b \mid b \in \Omega\},
\end{eqnarray*}
which yields a recursive formula $|V(\omega_1^{n_T})| = |V(\omega_1^{n_{T-1}})| -1 + A$, ensuring the desired result.

Finally, we define the conditional probability $Q_{LZ}(\omega_n \mid \omega_1^{n-1})$, which determines the betting strategy $\alpha_n(\omega_n)$, as follows.
For $n= 1$, let $Q_{LZ}(a) : = 1/A$. For $n \geq 2$, we define
\begin{equation}\label{eqn:Q_LZ}
Q_{LZ}(a \mid \omega_1^{n-1}) :=
\displaystyle \frac{|\{ \xi \in V(\omega_1^{n_T}) \mid \omega_{n_T +1}^{n-1} a \sqsubset \xi \}|}{|\{ \xi \in V(\omega_1^{n_T}) \mid \omega_{n_T +1}^{n-1} \sqsubset \xi\}|}.
\end{equation}
Note that
\[
\{ \xi \in V(\omega_1^{n_T}) \mid \omega_{n_T +1}^{n-1} \sqsubset \xi \} = \bigsqcup_{a \in \Omega} \{ \xi \in V(\omega_1^{n_T}) \mid \omega_{n_T +1}^{n-1} a \sqsubset \xi \},
\]
which follows from the definition of $V(\omega_1^{n_T})$. Moreover,
the set 
\[ \{ \xi \in V(\omega_1^{n_T}) \mid \omega_{n_T +1}^{n-1} a \sqsubset \xi \} \]
is nonempty for any $a\in\Omega$.
Thus, we conclude that 
\[ \sum_{a \in \Omega} Q_{LZ}(a \mid \omega_1^{n-1}) = 1 \quad \mbox{and}\quad Q_{LZ}(a \mid \omega_1^{n-1})>0. \]

The motivation behind the definition \eqref{eqn:Q_LZ} is now in order. 
When $n-1 = n_T$ (i.e., when $\omega_{n_T +1}^{n-1} = \lambda$), we have
\[
Q_{LZ}(a \mid \omega_1^{n_T}) = \displaystyle \frac{|\{ \xi \in V(\omega_1^{n_T}) \mid a \sqsubset \xi \}|}{|V(\omega_1^{n_T})|}.
\]
Thus, for each $\xi=\xi_1^\tau \in V(\omega_1^{n_T})$,
\begin{align*}
Q_{LZ}(\xi \mid \omega_1^{n_T}) 
&:= Q_{LZ}(\xi_1 \mid \omega_1^{n_T}) \cdot Q_{LZ}(\xi_2 \mid \omega_1^{n_T} \xi_1)
\cdots Q_{LZ}(\xi_\tau \mid \omega_1^{n_T} \xi_1^{\tau-1})
= \frac{1}{|V(\omega_1^{n_T})|}.
\end{align*}
In other words, the conditional probability $Q_{LZ}(a \mid \omega_1^{n-1})$ is designed to induce the uniform distribution over $V(\omega_1^{n_T})$.
This observation also implies that  
\begin{equation}\label{eqn:evalQ_LZ}
 Q_{LZ}(\omega_{n_T +1}^n \mid \omega_1^{n_T}) > \frac{1}{|V(\omega_1^{n_T})|} 
\end{equation}
whenever $n>n_T$.

In what follows, we shall call the betting strategy based on the conditional probability $Q_{LZ}(a \mid \omega_1^{n-1})$ the {\it Lempel-Ziv betting strategy}.

\begin{rem}
From the above discussion,
\begin{equation}\label{eqn:LZcodelength}
 -\log Q_{LZ}(\omega_1^{n_T}) = \sum_{j= 0}^{T-1} \log (A + j(A-1)),
\end{equation}
where
\[ Q_{LZ}(\omega_1^n):=\prod_{i=1}^n Q_{LZ}(\omega_i \mid \omega_1^{i-1}). \] 
On the other hand, the Lempel-Ziv codelength $\ell_{LZ}$ \cite{LZ-78} is given by
\[ \ell_{LZ}(\omega_1^n) = \sum_{j = 1}^{T+1} \lceil \log_A (j A) \rceil. \]
\end{rem}

\subsection{Properties of Lempel-Ziv betting strategy}\label{subsec:lemma}

In this section, we outline several fundamental properties of the Lempel-Ziv betting strategy. All  proofs are deferred to Appendix \ref{app:Lemmas}.

Define the complexity $c(\omega_1^n)$ of a sequence $\omega_1^n \in \Omega^n$ as the total number $T$ of parsed phrases obtained through the incremental parsing of $\omega_1^n$ \cite[p.~448]{CoverThomas}.

\begin{lem}\label{lem:liminf_comp_and_Q}
For any $\omega_1^\infty \in \Omega^\infty$, the following inequality holds:
\[
\liminf_{n \to \infty} \frac{1}{n} \left\{ 
c(\omega_1^n)\log c(\omega_1^n) - (-\log Q_{LZ}(\omega_1^n)) \right\} \geq 0. 
\]
\end{lem}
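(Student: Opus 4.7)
The plan is to bound $-\log Q_{LZ}(\omega_1^n)$ from above using the explicit form established in the preceding subsection, compare this bound with $c(\omega_1^n)\log c(\omega_1^n)$ via Stirling's formula, and then invoke the classical Lempel-Ziv estimate $c(\omega_1^n)/n \to 0$ to close the argument.

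Concretely, let $T = c(\omega_1^n)$. From the identity $-\log Q_{LZ}(\omega_1^{n_T}) = \sum_{j=0}^{T-1}\log(A + j(A-1))$ together with the lower bound $Q_{LZ}(\omega_{n_T+1}^n \mid \omega_1^{n_T}) > 1/|V(\omega_1^{n_T})| = 1/(A + T(A-1))$, which are both collected in the preceding subsection, I would write
\[
-\log Q_{LZ}(\omega_1^n) \;<\; \sum_{j=0}^{T}\log(A + j(A-1)) \;\leq\; (T+1)\log A + \log((T+1)!),
\]
the second inequality using $A + j(A-1) \leq A(j+1)$. Applying Stirling's estimate $\log((T+1)!) = (T+1)\log(T+1) - (T+1) + O(\log T)$ together with $(T+1)\log(T+1) = T\log T + O(\log T)$, this simplifies to
\[
-\log Q_{LZ}(\omega_1^n) \;\leq\; T\log T + (T+1)(\log A - 1) + O(\log T),
\]
which rearranges into
\[
c(\omega_1^n)\log c(\omega_1^n) - \bigl(-\log Q_{LZ}(\omega_1^n)\bigr) \;\geq\; -(T+1)(\log A - 1) - O(\log T).
\]

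Finally, I would divide by $n$ and take $\liminf$. The classical Lempel-Ziv estimate asserts that $c(\omega_1^n)/n \to 0$ as $n \to \infty$; this follows from the fact that any collection of $T$ distinct nonempty strings over a size-$A$ alphabet has total length at least of order $T\log_A T$, so that $T\log_A T$ is bounded above by $n$ up to lower-order corrections, and hence $T/n = O(1/\log n) \to 0$. Consequently both $(T+1)/n$ and $(\log T)/n$ tend to zero, and the $\liminf$ of the right-hand side is nonnegative, yielding the desired conclusion.

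The main subtle point is that when $\log A > 1$ the term $-(T+1)(\log A - 1)$ is negative, so the above estimate does not give a pointwise nonnegative lower bound. One genuinely needs the sublinear growth $c(\omega_1^n) = o(n)$ to absorb this defect in the limit; this is the principal technical input and the only nontrivial ingredient beyond the asymptotic bookkeeping carried out above.
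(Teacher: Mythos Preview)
Your proof is correct and follows essentially the same route as the paper: both start from the bound $-\log Q_{LZ}(\omega_1^n) \le \sum_{j=0}^{T}\log(A+j(A-1))$ and then reduce to the classical estimate $c(\omega_1^n)/n \to 0$. The only cosmetic difference is that the paper bounds each summand by the largest one to get $(T+1)\log(A+T(A-1))$ directly, whereas you use $A+j(A-1)\le A(j+1)$ and Stirling; the paper's version avoids Stirling and is marginally shorter, but both arrive at a deficit of order $O(T)$ that vanishes after dividing by $n$.
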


For $n, \ell \in \mathbb{Z}_{>0}$ with $n>\ell$, let $T_n(a_1^\ell)$ denote the number of occurrences of $a_1^\ell \in \Omega^\ell$ in the cyclically extended word $\omega_1^n \omega_1^{\ell-1}$ of length $n+\ell-1$.
Similarly, let $T_n(b \mid a_1^\ell)$ represent the number of occurrences of $b$ immediately following $a_1^\ell \in \Omega^\ell$ in the extended word $\omega_1^n \omega_1^\ell$ of length $n+\ell$.
In other words, $T_n(b\mid a_1^\ell)=T_n(a_1^\ell b)$.

Note that, by definition,
\[ \sum_{a_1^\ell \in \Omega^\ell} T_n(a_1^\ell) = n. \]
Moreover, the quantity $T_n(a_1^\ell)$ is asymptotically equivalent to $S_n(a_1^\ell)$, the number of occurrences of $a_1^\ell$ in $\omega_1^n$, in the sense that
\[ \lim_{n \to \infty} \frac{1}{n}\left( T_n(a_1^\ell) - S_n(a_1^\ell) \right) = 0. \]
The following lemma will prove useful in the sequel.

\begin{lem}\label{lem:cond_empirical}
Let $n, \ell \in \mathbb{Z}_{>0}$ with $n>\ell$. Then, for all $a_1^\ell \in \Omega^\ell$ and $b \in \Omega$, the following identities hold:
\[
\sum_{b \in \Omega} T_n(b \mid a_1^\ell) = T_n(a_1^\ell) \quad\mbox{and}\quad
\sum_{a_1 \in \Omega} T_n(b \mid a_1^\ell) = T_n(a_2^\ell b).
\]
\end{lem}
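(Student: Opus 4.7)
The plan is to interpret both identities combinatorially, reading each $T_n$-quantity as a count of window positions in the appropriate cyclic extension, and then to reduce the identities to straightforward bijections between index ranges. Throughout, I note that the $(n+\ell)$-extension $\omega_1^n \omega_1^\ell$ used for length-$(\ell+1)$ windows and the $(n+\ell-1)$-extension $\omega_1^n \omega_1^{\ell-1}$ used for length-$\ell$ windows agree on their first $n+\ell-1$ positions; this common prefix will do most of the work.

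For the first identity $\sum_{b \in \Omega} T_n(b \mid a_1^\ell) = T_n(a_1^\ell)$, I would expand $T_n(b \mid a_1^\ell) = T_n(a_1^\ell b)$. The left-hand side then counts, summed over $b \in \Omega$, positions $i \in \{1,\dots,n\}$ in $\omega_1^n \omega_1^\ell$ at which the length-$(\ell+1)$ window equals $a_1^\ell b$. Summing over $b$ erases the constraint on the $(\ell+1)$-th coordinate, leaving the count of positions $i \in \{1,\dots,n\}$ whose length-$\ell$ window equals $a_1^\ell$. By the common-prefix observation above, this coincides with $T_n(a_1^\ell)$.

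For the second identity $\sum_{a_1 \in \Omega} T_n(b \mid a_1^\ell) = T_n(a_2^\ell b)$, summing over $a_1$ drops the constraint on the first coordinate of the $(\ell+1)$-window, so I am counting positions $i \in \{1,\dots,n\}$ in $\omega_1^n \omega_1^\ell$ with $\omega_{i+1}\cdots\omega_{i+\ell} = a_2 \cdots a_\ell b$. The substitution $j = i+1$ shifts this to a count over $j \in \{2,\dots,n+1\}$ of positions where the length-$\ell$ window starting at $j$ equals $a_2 \cdots a_\ell b$. For $j \in \{2,\dots,n\}$ the two extensions agree on the relevant substring, while at the boundary $j = n+1$ the window $\omega_{n+1}\cdots\omega_{n+\ell}$ cyclically unfolds as $\omega_1 \cdots \omega_\ell$, which is precisely the length-$\ell$ window at position $1$ in $\omega_1^n \omega_1^{\ell-1}$. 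This yields a bijection between $\{2,\dots,n+1\}$ and $\{1,\dots,n\}$ that preserves the windows, matching the count to $T_n(a_2^\ell b)$.

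The main obstacle, mild but worth pinning down, is the boundary bookkeeping in the second identity: one must verify that the wrap at $j = n+1$ in the longer extension really does coincide with the starting window at $j = 1$ in the shorter extension. This uses the hypothesis $n > \ell$ to guarantee that $\omega_1 \cdots \omega_\ell$ sits inside $\omega_1^n$ and is read identically in both extensions. Once this identification is in place, both identities are pure index-counting.
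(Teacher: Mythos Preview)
Your proposal is correct and follows essentially the same combinatorial approach as the paper. The only cosmetic difference is in the second identity: where you use the index shift $j=i+1$ and identify the wraparound window at $j=n+1$ with the window at position $1$, the paper instead compares occurrences in $\omega_2^n\omega_1^{\ell}$ with those in $\omega_1^n\omega_1^{\ell-1}$ via two boundary corrections $\Delta_1$ and $\Delta_\ell$ (for adding $\omega_1$ at the head and removing $\omega_\ell$ at the tail) that cancel.
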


Let us now fix $\ell \in \mathbb{Z}_{>0}$ arbitrarily.
Given $\omega_1^n$ with $n >\ell$, we define
\[
 \hat{M}_n^\ell (b \mid a^\ell) := \frac{T_n(b \mid a^\ell)}{T_n(a^\ell)}
\]
for $a^\ell \in \Omega^\ell$ satisfying $T_n(a^\ell)>0$. 
Due to Lemma \ref{lem:cond_empirical}, we observe that 
\[
\sum_{b \in \Omega} \hat{M}_n^\ell(b \mid a^\ell) = \sum_{b \in \Omega} \frac{T_n(b \mid a^\ell)}{T_n(a^\ell)} = 1.
\]
Thus, with an appropriate definition of $\hat{M}_n^\ell (b \mid a^\ell)$ for $a^\ell\in \Omega^\ell$ satisfying $T_n(a^\ell)=0$, $\hat{M}_n^\ell$ can be regarded as an $\ell$th-order Markov kernel formally associated with $\omega_1^n$. 
The following lemma provides a condition that ensures $T_n(a^k)/n$ converges to the stationary distribution $\pi$ of the $k$th-order Markov kernel $M$.

\begin{lem} \label{lem:kth_Markov_stationary}
If
\[
\lim_{n\to\infty} \hat{M}_n^k(b \mid a_1^k)
= M(b \mid a_1^k)
\]
for all $a_1^k \in \Omega^k$ and $b \in \Omega$, then,
\[
\lim_{n \to \infty} \frac{T_n(a_1^k)}{n} = \pi(a_1^k)
\]
for all $a_1^k \in \Omega^k$. 
\end{lem}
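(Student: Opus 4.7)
The plan is to view $\mu_n(a_1^k) := T_n(a_1^k)/n$ as a probability distribution on the finite state space $\Omega^k$, and to identify any subsequential limit of $\{\mu_n\}$ as the stationary distribution of the Markov chain on $\Omega^k$ whose transition probability from $(a_1, \ldots, a_k)$ to $(a_2, \ldots, a_k, b)$ is $M(b \mid a_1^k)$, and zero otherwise. Uniqueness of this stationary distribution, which equals $\pi$, together with compactness of the probability simplex, will force the full sequence $\mu_n$ to converge to $\pi$.

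To implement this, I would first invoke the second identity in Lemma \ref{lem:cond_empirical} (applied with $\ell = k$) together with the definitions of $T_n(b \mid a_1^k)$ and $\hat{M}_n^k$ to derive
\[
T_n(a_2^k b) \;=\; \sum_{a_1 \in \Omega} T_n(a_1^k b) \;=\; \sum_{a_1 \in \Omega} T_n(b \mid a_1^k) \;=\; \sum_{a_1 \in \Omega} T_n(a_1^k)\, \hat{M}_n^k(b \mid a_1^k).
\]
Dividing by $n$ yields
\[
\mu_n(a_2^k b) \;=\; \sum_{a_1 \in \Omega} \mu_n(a_1^k)\, \hat{M}_n^k(b \mid a_1^k).
\]
Since $\{\mu_n\}$ lies in the compact probability simplex on $\Omega^k$, any subsequence admits a further convergent subsequence $\mu_{n_j} \to \mu$. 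Passing to the limit along $n_j$ and invoking the hypothesis $\hat{M}_n^k(b \mid a_1^k) \to M(b \mid a_1^k)$, I would obtain
\[
\mu(a_2^k b) \;=\; \sum_{a_1 \in \Omega} \mu(a_1^k)\, M(b \mid a_1^k),
\]
which is exactly the stationarity equation for the Markov chain on $\Omega^k$ described above.

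The main obstacle is then showing that $\mu = \pi$ for every such subsequential limit, which reduces to uniqueness of the stationary distribution of this chain. Irreducibility is what delivers this: because $M(a \mid \omega^k) > 0$ for every $(a, \omega^k)$, from any state $(a_1, \ldots, a_k)$ each target state $(b_1, \ldots, b_k)$ is reached with strictly positive probability in exactly $k$ steps. The finite-state chain is thus irreducible (and aperiodic), hence admits a unique stationary distribution, namely $\pi$ (cf.\ Appendix \ref{app:stationary}). With uniqueness in hand, every convergent subsequence of $\{\mu_n\}$ has limit $\pi$, so $\mu_n \to \pi$ itself, establishing the claimed convergence of $T_n(a_1^k)/n$ for every $a_1^k \in \Omega^k$.
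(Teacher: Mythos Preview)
Your proof is correct and follows essentially the same approach as the paper: both use Lemma~\ref{lem:cond_empirical} to show the empirical distribution $\mu_n = T_n(\cdot)/n$ is stationary for $\hat{M}_n^k$, extract subsequential limits by compactness of the simplex, pass to the limit to obtain a stationary vector for $M$, and then invoke uniqueness of the stationary distribution on $\Omega^k$ (via irreducibility/Perron--Frobenius, as in Appendix~\ref{app:stationary}). The only cosmetic difference is that the paper works with the difference $\hat{q}_n - \pi$ and argues via the one-dimensionality of the eigenspace for eigenvalue $1$, whereas you work with $\mu_n$ directly and appeal to uniqueness of the stationary probability distribution; these are equivalent formulations of the same Perron--Frobenius input.
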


Next, for $n, \ell \in \mathbb{Z}_{>0}$ with $n>\ell$, we define
\[
  \hat{R}_n^{\ell}(\omega_1^n) :=\left\{\prod_{i=1}^{\ell} \hat{M}_n^\ell(\omega_i \mid \omega_{n-\ell+i}^n \omega_1^{i-1}) \right\}
    \cdot \left\{\prod_{i= \ell+1}^n \hat{M}_n^\ell(\omega_i\mid \omega_{i-\ell}^{i-1}) \right\}.
\]
In the first factor, which represents $\hat{R}_n^{\ell}(\omega_1^\ell)$, we formally introduce concatenated strings $\omega_{n-\ell+i}^n \omega_1^{i-1}$ of length $\ell$ for $i=1,\dots, \ell$, to facilitate the application of the $\ell$th Markov kernel $\hat{M}_n^\ell$.

\begin{lem}
\label{lem:Ziv_ineq}
For any $n, \ell \in \mathbb{Z}_{>0}$ with $n>\ell$ and $\omega_1^n \in \Omega^n$, the following inequality holds:
\[
\frac{1}{n} c(\omega_1^n) \log c(\omega_1^n) \leq -\frac{1}{n}\log \hat{R}_n^\ell(\omega_1^n) + \delta_\ell(n),
\]
where $\delta_\ell(n) \rightarrow 0$ as $n \rightarrow \infty$.
\end{lem}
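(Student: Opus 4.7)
The strategy is to establish a Ziv-type inequality relating the parsing complexity $c(\omega_1^n)$ to the codelength under the empirical $\ell$th-order Markov model $\hat R_n^\ell$, following the argument for incremental parsing in \cite{CoverThomas}. Writing the incremental parsing as $\omega_1^n = /y_1/y_2/\cdots/y_T/\omega_{n_T+1}^n$ with $c := c(\omega_1^n) = T$ distinct parsed phrases, for each index $j$ with $n_{j-1}+1 > \ell$ I define the \emph{state} of the $j$th phrase as $s_j := \omega_{n_{j-1}-\ell+1}^{n_{j-1}} \in \Omega^\ell$. The remaining ``boundary'' contributions -- at most $\ell+1$ phrases whose state would require the cyclically wrapped contexts of the first factor of $\hat R_n^\ell$, together with the last incomplete segment $\omega_{n_T+1}^n$ -- add nonnegative summands to $-\log \hat R_n^\ell(\omega_1^n)$, and I simply discard them, reducing attention to the set $J$ of ``normal'' phrases, whose size satisfies $c' := |J| \geq c - \ell - 1$.

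Next comes the key Jensen step. For each $(s,k) \in \Omega^\ell \times \mathbb{Z}_{>0}$, let $c_{s,k}$ count the normal phrases of length $k$ and state $s$. The parsed phrases are distinct by construction, and the map $y \mapsto \hat R_n^\ell(y \mid s)$ is a probability distribution on $\Omega^k$ because each row of $\hat M_n^\ell$ is stochastic by Lemma~\ref{lem:cond_empirical}, so $\sum_{j:\,s_j=s,\,|y_j|=k} \hat R_n^\ell(y_j \mid s) \leq 1$. Concavity of $\log$ then yields
\[
\sum_{j:\,s_j=s,\,|y_j|=k} \bigl(-\log \hat R_n^\ell(y_j \mid s)\bigr) \;\geq\; c_{s,k}\log c_{s,k},
\]
and summing over $(s,k)$ gives
\[
-\log \hat R_n^\ell(\omega_1^n) \;\geq\; \sum_{s,k} c_{s,k}\log c_{s,k} \;=\; c'\log c' \,-\, c'\,H(\pi_{s,k}),
\]
where $\pi_{s,k} := c_{s,k}/c'$ is the empirical joint distribution on (state, length) pairs.

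It remains to bundle $\bigl(c\log c - c'\log c' + c'H(\pi_{s,k})\bigr)/n$ into a single vanishing $\delta_\ell(n)$. Since $c - c' \leq \ell + 1$, the discrepancy $c\log c - c'\log c'$ is $O(\ell \log n)$ and so vanishes after dividing by $n$ for fixed $\ell$. For the entropy, the chain rule and the maximum-entropy bound for distributions on positive integers with prescribed mean give $H(\pi_{s,k}) \leq H(\pi_s) + H(\pi_k) \leq \ell\log A + \log(n/c') + O(1)$. Invoking the standard Lempel--Ziv length bound $c(\omega_1^n) \leq n/\bigl((1 - \varepsilon_n)\log_A n\bigr)$ with $\varepsilon_n \to 0$ (derived from counting distinct strings of each length), the dominant term $(c/n)\log(n/c)$ tends to zero because $x \mapsto (x/n)\log(n/x)$ is increasing on $[1,n/e]$ and its value at $x = n/\log_A n$ is $(\log\log_A n)/\log_A n \to 0$.

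The main obstacles I anticipate are twofold. First, the careful bookkeeping of boundary phrases and of the cyclically wrapped contexts in the first factor of $\hat R_n^\ell$ has to be done without loss -- the nonnegativity $-\log \hat M_n^\ell(\cdot \mid \cdot) \geq 0$ handles this cleanly. Second, the quantitative entropy control is delicate: the key quantity $(c/n)\log(n/c)$ is only $O(1)$ in general but vanishes precisely in the Lempel--Ziv regime $c = O(n/\log n)$, so the proof must explicitly invoke this constraint on $c$. Beyond these two points, the argument reduces to a routine application of Jensen's inequality together with the identity $\sum c_{s,k}\log c_{s,k} = c'\log c' - c'H(\pi_{s,k})$.
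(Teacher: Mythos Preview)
Your proposal is correct and follows essentially the same route as the paper's proof: the Jensen step on the parsed phrases grouped by (state, length), the identity $\sum c_{s,k}\log c_{s,k} = c\log c - cH(\pi_{s,k})$, the entropy splitting $H(\pi_{s,k}) \le \ell\log A + H(\pi_k)$, the mean-constrained entropy bound on $H(\pi_k)$, and the appeal to $c(\omega_1^n) = O(n/\log n)$ are exactly the ingredients in the paper (which in turn follows \cite[\S13.5.2]{CoverThomas}). The only cosmetic difference is that the paper assigns \emph{cyclic} states $s_i$ to the first $\ell$ positions so as to match the cyclic first factor of $\hat R_n^\ell$ and merges the trailing incomplete segment into the last phrase, thereby avoiding your $c'$-versus-$c$ bookkeeping; your alternative of discarding those boundary contributions via nonnegativity of $-\log\hat M_n^\ell$ is equally valid and costs only the harmless extra $O(\ell\log n/n)$ term.
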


Finally, for $n \in \mathbb{Z}_{\geq k}$, let us introduce
\[
\tilde{P}(\omega_1^n) :=\left\{\prod_{i= 1}^k p_i(\omega_i)\right\}\cdot \left\{ \prod_{i=k+1}^n M(\omega_i \mid \omega^{i-1}_{i-k}) \right\}, 
\]
which represents Forecaster's announcements.
The next lemma establishes the relationship between $\tilde{P}$ and $\hat{R}_n^\ell$.

\begin{lem}\label{lem:lth_Markov_divergence}
For any $n, \ell \in \mathbb{Z}_{\ge k}$ with $n>\ell$, the following identity holds:
\begin{align*}
-\log \tilde{P}(\omega_1^n) + \log \hat{R}^{\ell}_n (\omega_1^n)
&= -\log \tilde{P}(\omega_1^{\ell}) + \log \prod_{i= 1}^{\ell} M(\omega_i \mid \omega_{n-k+i}^n \omega_1^{i-1})  \\
&\quad
+ \sum_{a_1^{\ell} \in \Omega^{\ell}} T_n(a_1^{\ell}) \cdot 
 D \left(\left. \hat{M}_n^\ell(\;\cdot \mid a_1^\ell) \right\| M(\;\cdot \mid a_{\ell-k+1}^{\ell}) \right),
\end{align*}
where $D(\,\cdot\,  || \,\cdot\,)$ is the Kullback-Leibler divergence.
\end{lem}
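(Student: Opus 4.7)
The plan is to rewrite everything in terms of cyclic empirical counts $T_n(\cdot)$ and then collect. Since $\ell \ge k$, I would first factor
\begin{equation*}
-\log \tilde P(\omega_1^n) = -\log \tilde P(\omega_1^\ell) - \sum_{i=\ell+1}^n \log M(\omega_i \mid \omega_{i-k}^{i-1}),
\end{equation*}
so that the $\tilde P(\omega_1^\ell)$ factor on the right-hand side is accounted for. What remains on the left-hand side is a difference of two position-indexed logarithmic sums (the bulk terms of $\log \hat R_n^\ell$ and of $\log \tilde P$), plus the boundary wrap-around terms of $\log \hat R_n^\ell$.

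The key tool is the cyclic counting identity: writing $\tilde\omega$ for the cyclic extension of $\omega_1^n$ and letting indices run over the full range $i = 1, \dots, n$, each pair $(c,b) \in \Omega^m \times \Omega$ is realised as $(\tilde\omega_{i-m}^{i-1}, \tilde\omega_i)$ exactly $T_n(cb)$ times, so for any function $h$,
\begin{equation*}
\sum_{i=1}^n \log h\bigl(\tilde\omega_i \bigm| \tilde\omega_{i-m}^{i-1}\bigr) = \sum_{c \in \Omega^m, \, b \in \Omega} T_n(cb) \log h(b \mid c).
\end{equation*}
I would apply this with $m=\ell$, $h = \hat M_n^\ell$ and with $m=k$, $h = M$. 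In each case, the ``boundary'' indices $i = 1, \dots, \ell$ precisely produce cyclic wrap-around contexts which reassemble into $\log \prod_{i=1}^\ell \hat M_n^\ell(\omega_i \mid \omega_{n-\ell+i}^n \omega_1^{i-1})$ (the first factor of $\log \hat R_n^\ell$) and, under the natural convention that $M$ reads only the length-$k$ suffix of its context, into $\log \prod_{i=1}^\ell M(\omega_i \mid \omega_{n-k+i}^n \omega_1^{i-1})$. Removing those boundary terms leaves the bulk contribution
\begin{equation*}
\sum_{a_1^\ell \in \Omega^\ell, \, b \in \Omega} T_n(a_1^\ell b) \log \hat M_n^\ell(b \mid a_1^\ell) \; - \; \sum_{c_1^k \in \Omega^k, \, b \in \Omega} T_n(c_1^k b) \log M(b \mid c_1^k).
\end{equation*}

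To unify the two summation ranges, I would iterate the first identity of Lemma \ref{lem:cond_empirical} a total of $\ell-k$ times, yielding $\sum_{a_1^{\ell-k}} T_n(a_1^{\ell-k} c_1^k b) = T_n(c_1^k b)$; substituting this into the second sum rewrites it as $\sum_{a_1^\ell, b} T_n(a_1^\ell b) \log M(b \mid a_{\ell-k+1}^\ell)$. Using $T_n(a_1^\ell b) = T_n(a_1^\ell)\,\hat M_n^\ell(b \mid a_1^\ell)$, the difference then collapses into $\sum_{a_1^\ell} T_n(a_1^\ell)\, D\bigl(\hat M_n^\ell(\,\cdot \mid a_1^\ell) \,\big\|\, M(\,\cdot \mid a_{\ell-k+1}^\ell)\bigr)$, and combining with the boundary terms gives the claim.

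The main obstacle is the bookkeeping around two distinct context lengths $k$ and $\ell$: the cyclic wrap-around terms for $M$ (of natural length $k$) must be reconciled with the product $\prod_{i=1}^\ell M(\omega_i \mid \omega_{n-k+i}^n \omega_1^{i-1})$ stated in the lemma, whose indices $i = k+1, \dots, \ell$ involve ostensibly over-long contexts that must be interpreted via their length-$k$ suffix $\omega_{i-k}^{i-1}$. Once this convention is fixed, the reassembly of the boundary terms and the marginalisation from $\ell$-grams to $k$-grams via iterated Lemma \ref{lem:cond_empirical} are routine.
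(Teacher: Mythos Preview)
Your proposal is correct and follows essentially the same route as the paper's proof: factor out $\tilde P(\omega_1^\ell)$, convert the remaining position-indexed sums into cyclic empirical counts, reconcile the two context lengths, and collapse into the KL term. The only organizational difference is that the paper lifts $M$'s context from $k$ to $\ell$ \emph{before} counting (writing $M(\omega_i\mid\omega_{i-k}^{i-1})=M(\omega_i\mid\omega_{i-\ell}^{i-1})$ so that the $M$-sum is already over $\ell$-grams), whereas you count $M$ at its natural context length $k$ and then marginalise the counts; these are equivalent. One small slip: the marginalisation $\sum_{a_1^{\ell-k}} T_n(a_1^{\ell-k}c_1^k b)=T_n(c_1^k b)$ comes from iterating the \emph{second} identity of Lemma~\ref{lem:cond_empirical} (summing over the leading letter), not the first.
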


\section{Proof of Theorem \ref{thm:kth_Markov_normal_number}}

Let $K^\alpha: \Omega^* \to\mathbb{R}$ represent the capital process associated with a betting strategy $\alpha$, so that $K_n=K^\alpha(\omega_1^n)$ for $\omega_1^n\in\Omega^*$.
The following lemma demonstrates that the limit supremum for the capital process can be replaced by a limit.

\begin{lem} \label{lem:limsup}
Suppose a strategy $\alpha$ satisfies 
\[ \limsup_{n \to \infty} K^\alpha(\omega_1^n) = \infty \]
for a specific sequence of Reality's moves $\{\omega_1^n\}_{n \in \mathbb{Z}_{>0}}$. 
Then there exists another strategy $\alpha_*$ such that 
\[ \lim_{n \to \infty} K^{\alpha_*} (\omega_1^n) = \infty \]
for the same sequence $\{\omega_1^n\}_{n \in \mathbb{Z}_{>0}}$.
\end{lem}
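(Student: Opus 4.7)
The plan is to convert $\alpha$ into a strategy whose capital process grows monotonically by splitting Skeptic's initial unit capital into countably many sub-accounts, each running a stopped and scaled copy of $\alpha$. This is the classical ``cash-out'' construction.

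First, since $\limsup_n K^\alpha(\omega_1^n) = \infty$ on the prescribed sequence, the stopping times $T_i := \inf\{n \in \mathbb{Z}_{>0} : K^\alpha(\omega_1^n) \ge 2^i\}$ are all finite. For each positive integer $i$, define a freezing variant of $\alpha$ by
\[
\alpha^{(i)}(\omega_1^{n-1}) :=
\begin{cases}
\alpha(\omega_1^{n-1}) & \text{if } \max_{m \le n-1} K^\alpha(\omega_1^m) < 2^i,\\
0 & \text{otherwise.}
\end{cases}
\]
This depends only on the history $\omega_1^{n-1}$, so it is a legitimate betting strategy. Its capital process $K^{(i)}$ coincides with $K^\alpha$ up to time $T_i$ and is thereafter constant with frozen value $K^{(i)}(\omega_1^n) = K^\alpha(\omega_1^{T_i}) \ge 2^i$ for all $n \ge T_i$.

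Next I pass to the additive formulation $\beta_n = K_{n-1}\alpha_n$ of the bet, in which the capital increment $K_n - K_{n-1} = \sum_a \beta_n(a)(\delta_{\omega_n}(a) - p_n(a))$ is linear in $\beta_n$. Allocating weight $2^{-i}$ to the $i$th sub-account, I define
\[
\beta^*_n := \sum_{i=1}^\infty 2^{-i}\, K^{(i)}_{n-1}\,\alpha^{(i)}_n.
\]
At step $n$ only the sub-accounts with $2^i > \max_{m \le n-1} K^\alpha(\omega_1^m)$ are still active, so this is in fact a finite sum of terms with $\alpha^{(i)}_n \ne 0$; the remaining tail contributes zero. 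Hence $\beta^*_n$ (and the corresponding $\alpha_*$) is a well-defined strategy whose capital is the convex combination
\[
K^{\alpha_*}(\omega_1^n) = \sum_{i=1}^\infty 2^{-i}\, K^{(i)}(\omega_1^n).
\]

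Finally, I show $K^{\alpha_*}(\omega_1^n) \to \infty$. For any integer $N$, all of $T_1,\dots,T_N$ are finite, so for every $n \ge \max_{i\le N} T_i$ each of the first $N$ accounts is frozen at a value $\ge 2^i$, contributing at least $\sum_{i=1}^N 2^{-i}\cdot 2^i = N$ to the total, while the remaining accounts contribute nonnegatively whenever $\alpha$ is prudent (or, in general, eventually nonnegatively by the same freezing argument applied account by account). Letting $N \to \infty$ yields $\lim_{n\to\infty} K^{\alpha_*}(\omega_1^n) = \infty$ on the prescribed sequence.

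The main technical point I expect to pay attention to is the consistency of the infinite-sum definition of $\beta^*_n$: one must verify that at every step only finitely many sub-accounts are still active, so that $\beta^*_n$ is a genuine vector in $\mathbb{R}^\Omega$ determined by $\omega_1^{n-1}$, and that passage back to the multiplicative strategy $\alpha_*$ via $\alpha_{*,n} = \beta^*_n/K^{\alpha_*}_{n-1}$ is unproblematic since $K^{\alpha_*}_{n-1} > 0$ (as a convex combination with at least one strictly positive frozen sub-account once $T_1$ has occurred, and equal to $1$ beforehand). Apart from this bookkeeping the argument is routine.
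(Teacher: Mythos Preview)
Your argument is sound and reaches the conclusion, but by a different route from the paper, and with one slip in the bookkeeping.

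The paper uses a \emph{sequential} cash-out with a single active account: play $\alpha$ until the running capital first reaches $2$, transfer the surplus $K_n-1\ge 1$ to external storage, reset the active capital to $1$, and repeat with $\alpha$. Since $\limsup_n K^\alpha(\omega_1^n)=\infty$ forces the ratio $K^\alpha(\omega_1^n)/K^\alpha(\omega_1^\tau)$ to be unbounded after any stopping time $\tau$, each phase terminates in finite time and the storage grows without bound. You instead run a \emph{parallel} convex mixture of countably many stopped copies of $\alpha$ with weights $2^{-i}$ --- ironically the ``diversified betting with countably many operating accounts'' that the paper's abstract says it wishes to avoid for the main theorem. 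For this auxiliary lemma either device works; the paper's version is shorter and needs no infinite sums, while yours makes the resulting capital lower bound $K^{\alpha_*}\ge N$ completely explicit.

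The slip: you assert that $\beta^*_n=\sum_i 2^{-i}K^{(i)}_{n-1}\alpha^{(i)}_n$ has only finitely many nonzero terms because only the accounts with $2^i>\max_{m\le n-1}K^\alpha(\omega_1^m)$ are still active. That is backwards. The \emph{frozen} accounts (small $i$) are the finitely many ones with $\alpha^{(i)}_n=0$; the \emph{active} accounts form the infinite tail, each with $\alpha^{(i)}_n=\alpha_n$ and $K^{(i)}_{n-1}=K^\alpha_{n-1}$. The sum nevertheless converges, since the tail collapses to $\bigl(\sum_{i>I}2^{-i}\bigr)K^\alpha_{n-1}\alpha_n$, so your construction survives once the justification is corrected. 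Prudence of $\alpha$ (which holds for the Lempel--Ziv strategy the paper actually applies this lemma to) then guarantees every $K^{(i)}>0$, hence $K^{\alpha_*}>0$, and the passage back from $\beta^*$ to $\alpha_*$ is indeed unproblematic.
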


\begin{proof}
The required strategy $\alpha_*$ can be defined as follows:
Strategy $\alpha_*$ uses $\alpha$ as long as the capital $K_n$ retains below 2. 
Once $K_n$ reaches or exceeds $2$, $\alpha_*$ transfers the net gain $\Delta K:=K_n-1\, (\ge 1)$ into an external storage. It then restart the game with a capital of $1$, employing the strategy $\alpha$ again. 
\end{proof}

\begin{lem} \label{lem:SLLN:kth_Markov}
In the time-homogeneous $k$th-order Markovian predictive game, the Lempel-Ziv betting strategy $Q_{LZ}$ ensures $\lim_{n \to \infty} K_n = \infty$ unless 
\[ 
\lim_{n \to \infty} \hat{M}_n^k(b \mid a_1^k) = M(b \mid a_1^k)
\]
for all $a_1^k \in \Omega^k$ and  $b \in \Omega$.
\end{lem}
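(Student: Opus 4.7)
Combining Lemma~\ref{lem:liminf_comp_and_Q} with Lemma~\ref{lem:Ziv_ineq} (applied with $\ell=k$) gives
\[
  -\log Q_{LZ}(\omega_1^n)\ \le\ -\log\hat R_n^k(\omega_1^n)\ +\ o(n).
\]
Since $K_n=Q_{LZ}(\omega_1^n)/\tilde P(\omega_1^n)$, this rearranges into $\log K_n\ge-\log\tilde P(\omega_1^n)+\log\hat R_n^k(\omega_1^n)-o(n)$. Invoking Lemma~\ref{lem:lth_Markov_divergence} with $\ell=k$, and noting that the boundary terms $-\log\tilde P(\omega_1^k)+\log\prod_{i=1}^k M(\omega_i\mid\omega_{n-k+i}^n\omega_1^{i-1})$ are $O(1)$ because $M$ takes values in a finite subset of $(0,1)$ (and similarly for the initial factors $p_i$), this yields the key lower bound
\[
  \log K_n\ \ge\ \sum_{a_1^k\in\Omega^k} T_n(a_1^k)\,D\bigl(\hat M_n^k(\,\cdot\mid a_1^k)\,\|\,M(\,\cdot\mid a_1^k)\bigr)\ -\ o(n).
\]

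The second half of the plan forces $K_n\to\infty$ when $\hat M_n^k\not\to M$. Suppose $|\hat M_{n_j}^k(b_0\mid a^{k,0})-M(b_0\mid a^{k,0})|\ge\epsilon>0$ along some subsequence $\{n_j\}$. Adopting the convention $\hat M_n^k(\,\cdot\mid a^k):=M(\,\cdot\mid a^k)$ whenever $T_n(a^k)=0$, we automatically have $T_{n_j}(a^{k,0})\ge 1$, and Pinsker's inequality gives a uniform lower bound $D(\hat M_{n_j}^k(\,\cdot\mid a^{k,0})\,\|\,M(\,\cdot\mid a^{k,0}))\ge c_\epsilon>0$. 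If $\liminf_j T_{n_j}(a^{k,0})/n_j>0$, the $a^{k,0}$-summand alone is $\Omega(n_j)$, so $\limsup_n(\log K_n)/n>0$ and hence $\limsup_n K_n=\infty$; a final appeal to Lemma~\ref{lem:limsup} then upgrades this to $\lim_n K_n=\infty$ via a cash-out-and-restart modification of the strategy.

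The main technical obstacle is the low-frequency case $T_{n_j}(a^{k,0})/n_j\to 0$, in which the offending summand contributes only $o(n_j)$ and cannot dominate the $o(n)$ error. I plan to dispose of it by contradiction: if $\hat M_n^k(\,\cdot\mid a^k)\to M(\,\cdot\mid a^k)$ on every $a^k$ with $\liminf_n T_n(a^k)/n>0$, then an adaptation of Lemma~\ref{lem:kth_Markov_stationary}, restricted to this ``frequent'' sub-collection of strings, forces $T_n(a^k)/n\to\pi(a^k)$ on it; because the stationary distribution $\pi$ is strictly positive on all of $\Omega^k$, every string must in fact be frequent, contradicting $T_{n_j}(a^{k,0})/n_j\to 0$. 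Hence non-convergence of $\hat M_n^k$ necessarily manifests at some string of strictly positive asymptotic frequency, which reduces to the already-handled case. I expect this bootstrap between positivity of $\pi$ and the zero-frequency convention for $\hat M_n^k$ to be the delicate step requiring the most care.
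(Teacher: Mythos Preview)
Your derivation of the lower bound $\log K_n \ge \sum_{a_1^k} T_n(a_1^k)\,D\bigl(\hat M_n^k(\cdot\mid a_1^k)\,\|\,M(\cdot\mid a_1^k)\bigr) - o(n)$ via Lemmas~\ref{lem:liminf_comp_and_Q}, \ref{lem:Ziv_ineq}, and \ref{lem:lth_Markov_divergence} matches the paper, as does the final appeal to Lemma~\ref{lem:limsup}. The gap lies in the low-frequency case. Your proposed adaptation of Lemma~\ref{lem:kth_Markov_stationary} ``restricted to the frequent sub-collection'' $F=\{a^k:\liminf_n T_n(a^k)/n>0\}$ cannot work as stated: the shift $a_1^k\mapsto a_2^k b$ need not preserve $F$, so there is no restricted kernel to which Perron--Frobenius applies, and the proof of Lemma~\ref{lem:kth_Markov_stationary} genuinely requires $\hat M_n^k\to M$ at \emph{every} $a^k$. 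Moreover, even granting $T_n(a^k)/n\to\pi(a^k)$ on $F$, your conclusion ``every string must be frequent'' does not follow: membership in $F^c$ means only $\liminf_n T_n(a^k)/n=0$, which is compatible with $\sum_{a^k\notin F}T_n(a^k)/n$ staying bounded away from zero along subsequences.

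The paper takes a different route. Assuming for contradiction that the weighted divergence sum tends to zero, non-convergence at $b_1^k$ forces $T_{n_i}(b_1^k)/n_i\to 0$ along a subsequence. The key step is a shift propagation: since $T_{n_i}(d_k b_1^k)\le T_{n_i}(b_1^k)$ for every $d_k$, if some $T_{n_i}(d_k b_1^{k-1})/n_i$ stayed positive one would obtain $\hat M_{n_i}^k(b_k\mid d_k b_1^{k-1})\to 0\ne M(b_k\mid d_k b_1^{k-1})$ at a string of positive weight, contradicting the vanishing divergence sum; hence $T_{n_i}(d_k b_1^{k-1})/n_i\to 0$ for all $d_k$. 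Iterating $k$ times gives $T_{n_i}(d_1^k)/n_i\to 0$ for all $d_1^k$, contradicting $\sum_{d_1^k}T_n(d_1^k)=n$. Your intuition can be salvaged, but only by importing the same contradiction hypothesis: once the divergence sum vanishes, any subsequential limit $q$ of $T_n(\cdot)/n$ satisfies $q(a_2^k b)=\sum_{a_1}M(b\mid a_1^k)\,q(a_1^k)$ because terms with $q(a_1^k)=0$ contribute nothing while those with $q(a_1^k)>0$ force $\hat M_{n_i}^k\to M$ there, whence $q=\pi>0$ everywhere---but that observation is precisely what is missing from your sketch.
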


\begin{proof}
Using equation \eqref{eqn:capital} and applying Lemma \ref{lem:Ziv_ineq}, we obtain
\begin{align*}
\frac{\log K_n}{n}
&= \frac{1}{n} \left\{-\log \tilde{P}(\omega_1^n)  - (-\log Q_{LZ}(\omega_1^n)) \right\} \\
&= \frac{1}{n} \left\{-\log \tilde{P}(\omega_1^n) - c(\omega_1^n)\log c(\omega_1^n) \right\}+ \frac{1}{n} \Big\{ c(\omega_1^n)\log c(\omega_1^n) -( -\log Q_{LZ}(\omega_1^n)) \Big\} \\
&\geq \frac{1}{n} \left\{-\log \tilde{P}(\omega_1^n) + \log \hat{R}_n^{k}(\omega_1^n)- n\delta_k(n) \right\} 
+ \frac{1}{n} \Big\{ c(\omega_1^n)\log c(\omega_1^n) - (-\log Q_{LZ}(\omega_1^n)) \Big\}.
\end{align*}
Applying Lemma \ref{lem:lth_Markov_divergence} with $\ell=k$, we further evaluate $K_n$ as
\begin{align*}
\frac{\log K_n}{n}
&\ge \sum_{a_1^k \in \Omega^k} \frac{T_n(a_1^k)}{n} \cdot 
 D\left( \left. \hat{M}_n^k(\;\cdot \mid a_1^k) \right\|  M(\;\cdot \mid a_1^k) \right) \\
&\ \ \ + \frac{1}{n} \left\{ -\log \tilde{P}(\omega_1^k) + \log \prod_{i= 1}^{k} M(\omega_i \mid \omega_{n-k+i}^n \omega_1^{i-1}) \right\} -  \delta_{k} (n) \\
&\ \ \ +\frac{1}{n} \Big\{ c(\omega_1^n) \log c(\omega_1^n) - (-\log Q_{LZ}(\omega_1^n)) \Big\}.
\end{align*}
From the definition of $\tilde{P}$,
\[
\lim_{n \to \infty}\frac{1}{n} \left\{ -\log \tilde{P}(\omega_1^{k}) + \log \prod_{i= 1}^{k} M(\omega_i \mid \omega_{n-k+i}^n \omega_1^{i-1}) \right\} =0.
\]
Furthermore, by Lemma \ref{lem:Ziv_ineq}, we know that $\delta_{k}(n) \to 0$ as $n \to \infty$, 
and by Lemma \ref{lem:liminf_comp_and_Q},
\[
\liminf_{n \to \infty}\frac{1}{n} \Big\{ c(\omega_1^n) \log c(\omega_1^n) - (-\log Q_{LZ}(\omega_1^n)) \Big\} \geq 0.
\]

In light of Lemma \ref{lem:limsup}, therefore, it now suffices to prove the following claim:
For a given $\omega_1^\infty \in \Omega^\infty$, if there exists $b_1^{k+1} \in \Omega^{k+1}$ such that
$\hat{M}_n^k(b_{k+1} \mid b_1^k)$ does not converge to $M(b_{k+1} \mid b_1^k)$ as $n\to\infty$,
then
\[
\limsup_{n \to \infty} \sum_{a_1^k \in \Omega^k} \frac{T_n(a_1^k)}{n} \cdot D\left( \left. \hat{M}_n^k(\;\cdot \mid a_1^k) \right\| M(\;\cdot \mid a_1^k) \right) > 0.
\]

We prove this claim by contradiction.
Assume that for a given $\omega_1^\infty \in \Omega^\infty$, there exists $b_1^{k+1} \in \Omega^{k+1}$ such that
$\hat{M}_n^k(b_{k+1} \mid b_1^k)$ does not converge to $M(b_{k+1} \mid b_1^k)$ as $n\to\infty$, and yet 
\begin{equation}\label{eqn:div_converge_0}
\lim_{n \to \infty} 
\sum_{a_1^k \in \Omega^k} 
\frac{T_n(a_1^k)}{n} \cdot D\left(\left. \hat{M}_n^k(\;\cdot \mid a_1^k) \right\| M(\;\cdot \mid a_1^k) \right) = 0. 
\end{equation}
From this assumption, there exists a subsequence $(n_i)_i\subset (n)$ such that
\begin{equation}\label{eqn:induction1}
  \lim_{i \to\infty} \frac{T_{n_i}(b_1^k)}{n_i} = 0.
\end{equation}
Further, using \eqref{eqn:div_converge_0} and \eqref{eqn:induction1}, we can deduce that 
for all $d_k\in\Omega$, 
\begin{equation}\label{eqn:induction2}
  \lim_{i \to\infty} \frac{T_{n_i}(d_k b_1^{k-1})}{n_i} = 0.
\end{equation}
To establish \eqref{eqn:induction2}, we first recall the second identity in Lemma \ref{lem:cond_empirical}, which yields
\[
 T_n(d a_2^\ell b)\le \sum_{a_1\in\Omega}T_n(a_1 a_2^\ell b) = T_n(a_2^\ell b)
\]
for any $d\in\Omega$ and $a_2^\ell b\in\Omega^{\ell}$.
Thus, for any $d_k\in\Omega$, 
\[
 0 \leq \frac{T_{n_i}(d_k b_1^k)}{n_i}  \leq \frac{T_{n_i}(b_1^k)}{n_i}.
\]
From \eqref{eqn:induction1}, it follows that
\begin{equation}\label{eqn:induction2.1}
 \lim_{i \to \infty} \frac{T_{n_i}(d_k b_1^k)}{n_i} = 0. 
\end{equation}
Let us prove \eqref{eqn:induction2} by contradiction. Suppose there exists $d_k \in \Omega$ such that
\[
\limsup_{i \to \infty} \frac{T_{n_i}(d_k b_1^{k-1})}{n_i} >0.
\]
Then, there exists a subsequence $(n_{i_j})_{j} \subset (n_i)_i$ such that
\begin{equation}\label{eqn:induction2.2}
 \lim_{j \to \infty} \frac{T_{n_{i_j}}(d_k b_1^{k-1}) }{ n_{i_j}}  >0.
\end{equation}
Consequently, for sufficiently large $j$, we have
\begin{align*}
\frac{T_{n_{i_j}}( d_k b_1^k)}{n_{i_j}}
&= \frac{T_{n_{i_j}}(b_k \mid d_k b_1^{k-1})}{n_{i_j}}
= \frac{T_{n_{i_j}}(d_k b_1^{k-1})}{n_{i_j}} \frac{T_{n_{i_j}}(b_k \mid d_k b_1^{k-1})}{T_{n_{i_j}}(d_k b_1^{k-1})} \\
&= \frac{T_{n_{i_j}}(d_k b_1^{k-1})}{n_{i_j}} \hat{M}_{n_{i_j}}^k(b_k \mid d_k b_1^{k-1}),
\end{align*}
and thus, combining \eqref{eqn:induction2.1} and \eqref{eqn:induction2.2}, we find $\lim_{j \to \infty} \hat{M}_{n_{i_j}}^k(b_k \mid d_k b_1^{k-1}) =0$. 
Recalling $M(b_k \mid d_k b_1^{k-1})>0$, we find that
\begin{equation}\label{eqn:induction2.3}
\lim_{j \to \infty} \hat{M}_{n_{i_j}}^k(b_k \mid d_k b_1^{k-1}) \neq M(b_k \mid d_k b_1^{k-1}).
\end{equation}
Since \eqref{eqn:induction2.2} and \eqref{eqn:induction2.3} contradict \eqref{eqn:div_converge_0}, 
we have \eqref{eqn:induction2}.

By repeatedly applying this reasoning, adding $d_{k-i+1}$ to the front and removing $b_{k-i+1}$ from the rear of the word $b_1^k$ in \eqref{eqn:induction1} for $i=1,\dots,k$, 
we conclude that
\[
  \lim_{i \to\infty} \frac{T_{n_i}(d_1^k)}{n_i} = 0
\]
for all $d_1^k\in\Omega^k$.  
This contradicts the fact that
\[ 
 \sum_{d_1^k\in\Omega} T_{n}(d_1^k)=n
\]
for all $n \;(\geq k)$. Hence, the claim is proved.
\end{proof}

Now we are ready to prove Theorem \ref{thm:kth_Markov_normal_number}.

\begin{proof}[Proof of Theorem \ref{thm:kth_Markov_normal_number}]
We prove the assertion by induction in $\ell\;(\geq k)$.
For $\ell = k$, the assertion holds by Lemmas \ref{lem:kth_Markov_stationary} and \ref{lem:SLLN:kth_Markov}.

Assume that the assertion holds for some $\ell\; (\ge k)$, namely,
\[
\lim_{n \to \infty} \frac{T_n(a_1^{\ell})}{n} = P(a_1^{\ell})
=\pi(a_1^k) \prod_{i=k+1}^{\ell} M(a_i \mid a^{i-1}_{i-k}) \;(>0) 
\]
for all $a_1^{\ell} \in \Omega^{\ell}$.
Since
\[
\frac{T_n(a_1^{\ell+1})}{n} 
= \frac{T_n(a_1^{\ell})}{n} \hat{M}_n^{\ell}(a_{\ell+1} \mid a_1^{\ell}), 
\]
we see that 
\[ \lim_{n\to\infty}\frac{T_n(a_1^{\ell+1})}{n} =P(a_1^{\ell+1})\]
holds if and only if 
\[ \lim_{n \to \infty} \hat{M}_n^{\ell}(a_{\ell+1} \mid a_1^{\ell}) = M(a_{\ell+1} \mid a_{\ell-k+1}^{\ell}). \]

By a similar evaluation in the proof of Lemma \ref{lem:SLLN:kth_Markov}, we see that
\begin{align*}
\frac{\log K_n}{n}
&\geq
 \sum_{a_1^{\ell} \in \Omega^{\ell}} \frac{T_n(a_1^{\ell})}{n} \cdot 
  D\left(\left. \hat{M}_n^{\ell}(\;\cdot \mid a_1^{\ell}) \right\| M(\;\cdot \mid a_{\ell-k+1}^{\ell}) \right) \\
&\ \ \  + \frac{1}{n} \left\{ -\log \tilde{P}(\omega_1^{\ell}) + \log \prod_{i= 1}^{\ell} M(\omega_i \mid \omega_{n-k+i}^n \omega_1^{i-1}) \right\}  -  \delta_{\ell} (n) \\
&\ \ \ +\frac{1}{n} \Big\{ c(\omega_1^n) \log c(\omega_1^n) - (-\log Q_{LZ}(\omega_1^n)) \Big\}.
\end{align*}
By the assumption of induction, $T_n(a_1^\ell)/n$ converges to a positive number $P(a_1^\ell)$ for all $a_1^\ell\in\Omega^\ell$ as $n\to\infty$. 
Therefore, if $\hat{M}_n^{\ell}(a_{\ell+1} \mid a_1^{\ell})$ does not converge to $M(a_{\ell+1} \mid a_{\ell-k+1}^{\ell})$ as $n\to\infty$,  we have $\limsup_{n \to \infty} K_n = \infty$. 
This completes the proof. 
\end{proof}

\section{Applications}

In this section, we provide some applications of Theorem \ref{thm:kth_Markov_normal_number}.

\subsection{Szil\'ard's engine game}

We begin by applying the framework of predictive games to thermodynamics, conceptualizing a thermodynamic cyclic as a betting game between Scientist and Nature%
\footnote{
A related perspective is discussed in \cite{Hiura}, which aims to give a game-theoretic characterization of Gibbs' distribution. Our approach differs by emphasizing the coding theoretic aspect of Szil\'ard's engine while adhering closely to the original formalism of the Shafer-Vovk theorem.
}.
As a fundamental prototype, we consider a work-extracting game inspired by Le\'o Szil\'ard's thought experiment \cite{Szilard}, which has been widely discussed in the context of the second law of thermodynamics and its relationship to information theory.

Consider the following work-extracting game played on a hypothetical engine illustrated in Figure~\ref{Figure-Szilard_game}:

\begin{figure}[t]
  \centering
  \includegraphics[width= 10cm]{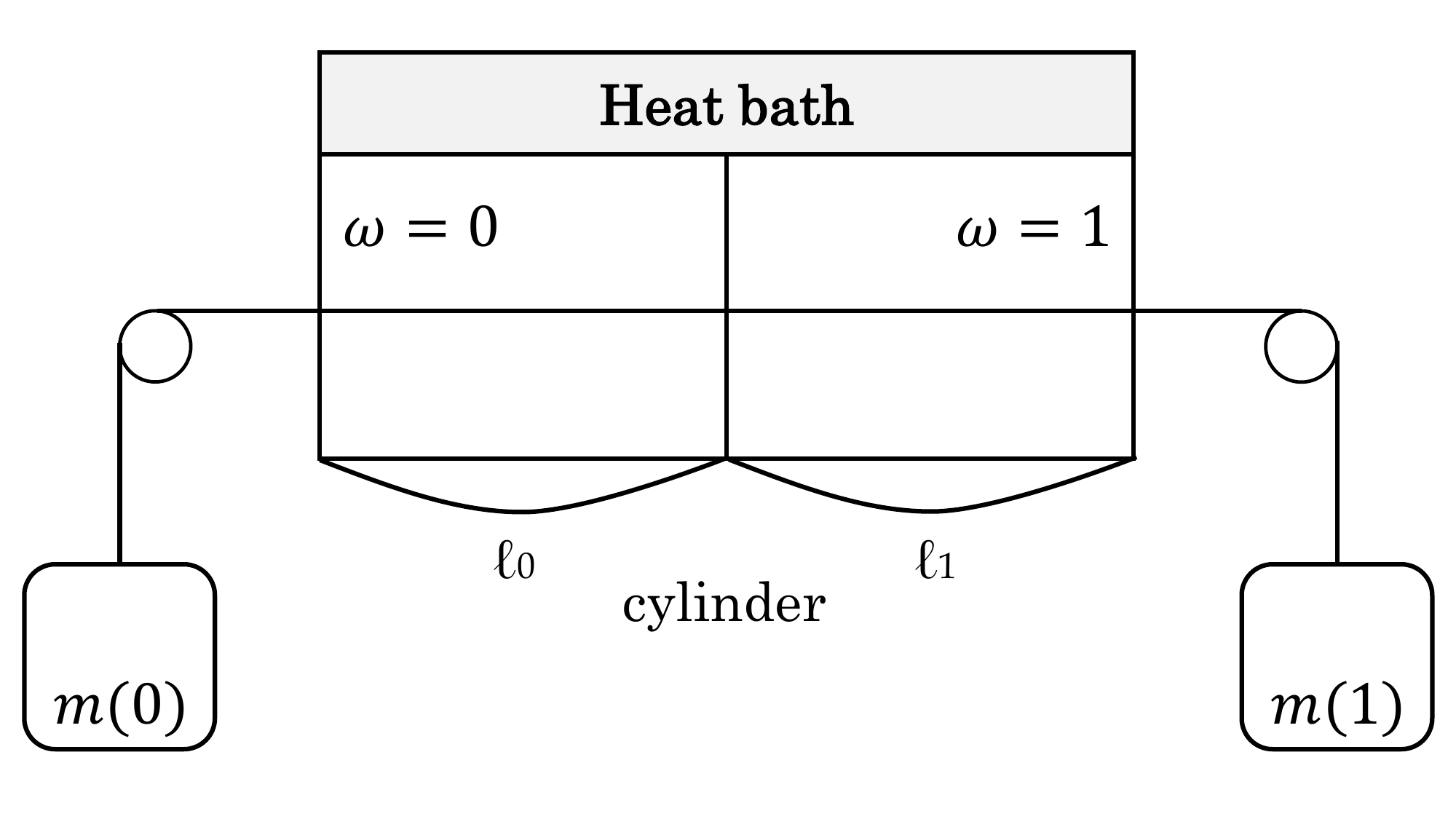}
  \caption{Szil\'ard's engine game.} \label{Figure-Szilard_game}
\end{figure}

\begin{enumerate}[(i)]
\item A partition, connected to two containers by inextensible strings, is placed at a specific position within a cylinder and fixed in place.
\item Scientist places a weight $m(0)$ on the left container and another weight $m(1)$ on the right container. 
\item Nature inserts a single molecule into one side of the partition, announces whether the molecule is in the left chamber ($\omega=0$) or the right chamber ($\omega=1$), and then releases the partition. 
\item If $\omega = 0$, the molecule pushes the partition to the right, and Scientist gains potential energy $m(0) g \ell_1 - m(1) g \ell_1$, where $g$ is the gravitational acceleration and $\ell_1$ is the displacement of the weights. 
If $\omega = 1$, on the other hand, Scientist instead gains potential energy $m(1) g \ell_0 - m(0) g \ell_0$.
\item Once the partition reaches the end of the cylinder, it is reset to its original position as in step (i). 
\end{enumerate}

Letting $\Omega := \{0,1\}$ and  
\[ r: = \frac{\ell_1}{\ell_0 + \ell_1}\in (0,1), \]
the above procedure can be formulated as a game-theoretic process:

\medskip
\begin{itembox}[l]{\bf Szil\'ard's engine game} \label{game:Szilard}
\textbf{Players} : Scientist and Nature. \\
\textbf{Protocol} : $W_0 = 1$. \\
\ \ FOR $n \in \mathbb{Z}_{>0}$ : \\
\ \ \ \ \ Scientist announces $m_n = (m_n(0), m_n(1)) \in \mathbb{R}^\Omega$. \\
\ \ \ \ \ Nature announces $\omega_n \in \Omega$. \\
\ \ \ \ \ $W_n := W_{n-1} + (m_n(1) - m_n(0)) g (\ell_0 + \ell_1) (\omega_n - r)$.\\
\ \ END FOR.
\end{itembox}

\medskip
At first glance, this game appears to favor Nature, as Nature announces $\omega_n$ after Scientist has set the weights. However, we can prove the following.

\begin{thm}\label{SLLN:Szilard}
In Szil\'ard's engine game, Scientist has a prudent strategy $\{m_n\}_n$ that ensures $\lim_{n \to \infty} W_n = \infty$ unless
\[
\lim_{n \to \infty} \frac{1}{n}\sum_{i=1}^n \delta_1 (\omega_i) = r.
\]
\end{thm}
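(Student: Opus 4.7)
The plan is to recognize Szil\'ard's engine game as nothing more than an affine rewriting of the simple predictive game (Game~1) with $\Omega=\{0,1\}$ and $p(1)=r$, $p(0)=1-r$, and thereby deduce Theorem~\ref{SLLN:Szilard} as an immediate corollary of Theorem~\ref{SLLN:predictive-Shafer_and_Vovk}. No new ingredient beyond a change of variables is required, so the Markovian machinery of Section~2 is not invoked here.

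First, I would algebraically identify the two capital updates. Since $\delta_{\omega_n}(1)=\omega_n$ and $\delta_{\omega_n}(0)=1-\omega_n$, one has $\delta_{\omega_n}(1)-r=\omega_n-r$ and $\delta_{\omega_n}(0)-(1-r)=-(\omega_n-r)$. Substituting into the simple predictive update, one obtains
\[
K_n = K_{n-1} + \bigl(\beta_n(1)-\beta_n(0)\bigr)(\omega_n - r),
\]
while Szil\'ard's game prescribes
\[
W_n = W_{n-1} + \bigl(m_n(1)-m_n(0)\bigr) g (\ell_0+\ell_1)(\omega_n - r).
\]
The two evolutions coincide once we match $\bigl(m_n(1)-m_n(0)\bigr) g(\ell_0+\ell_1)$ with $\beta_n(1)-\beta_n(0)$.

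Second, I would apply Theorem~\ref{SLLN:predictive-Shafer_and_Vovk} with $p=(1-r,\,r)\in\mathcal{P}(\{0,1\})$, which is a legitimate choice since $r\in(0,1)$ by construction. This yields a prudent Skeptic strategy $\beta:\Omega^\ast\to\mathbb{R}^\Omega$ forcing $\lim_{n\to\infty} K_n = \infty$ unless $\lim_{n\to\infty}\frac{1}{n}\sum_{i=1}^n \delta_1(\omega_i) = r$ (the condition for $a=0$ being automatic). I would then transfer this strategy by defining, e.g.,
\[
m_n(0) := 0,\qquad m_n(1) := \frac{\beta_n(1)-\beta_n(0)}{g(\ell_0+\ell_1)},
\]
as functions of Nature's past moves $\omega_1^{n-1}$. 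Starting from $W_0=K_0=1$, a one-line induction using the identification above gives $W_n=K_n$ for every $n$ and every Nature sequence, so both the positivity (prudence) and the divergence-versus-frequency dichotomy transfer verbatim.

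I do not anticipate a genuine obstacle: the argument is essentially a dictionary between the two games. The only point that merits explicit mention is that prudence $W_n>0$ is inherited from $K_n>0$ because the two capital processes are literally equal under the above identification; and that the frequency convergence condition in Theorem~\ref{SLLN:predictive-Shafer_and_Vovk} for all $a\in\{0,1\}$ reduces, by $\delta_0(\omega_i)+\delta_1(\omega_i)=1$, to the single condition on the frequency of $1$'s appearing in the statement of Theorem~\ref{SLLN:Szilard}.
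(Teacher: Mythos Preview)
Your proposal is correct and takes essentially the same approach as the paper: both recognize Szil\'ard's engine game as a reparametrization of the simple predictive game with $p=(1-r,r)$ and invoke Theorem~\ref{SLLN:predictive-Shafer_and_Vovk}. The paper does the bookkeeping via the multiplicative strategy $\alpha_n(a)=m_n(a)g(\ell_0+\ell_1)/W_{n-1}$ rather than your additive $\beta_n$, but this is a cosmetic difference already explained in the paper's introduction.
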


\begin{proof}
Observe that the recurrence relation is rewritten as
\begin{equation*}\label{eqn:recursion}
 W_n := W_{n-1}\left[ 1+ \sum_{a\in\Omega} \alpha_n(a) \left( \delta_{\omega_n}(a)-p(a)\right) \right],
\end{equation*}
where $p:=(p(0),\,p(1)):=(1-r,\, r)$ and
\[
 \alpha_n(a):=\frac{m_n(a)g(\ell_0+\ell_1)}{W_{n-1}}.
\]
Thus, Theorem \ref{SLLN:Szilard} is an immediate consequence of Theorem \ref{SLLN:predictive-Shafer_and_Vovk}.
\end{proof}

The implication of Theorem \ref{SLLN:Szilard} is as follows:
If Nature does not behave in accordance with the expected statistical law, Scientist can extract an infinite amount of work from the engine. A distinctive feature of this finding is that it can be achieved without invoking Maxwell's demon \cite{LeffRex} or employing any measurement scheme to determine a molecule's position before setting weights. Instead, Scientist only needs to detect deviations in Nature's behavior from the statistical law using a universal data compression technique.
Note that this result bears a close resemblance to Kelvin's formulation of the second law of thermodynamics, which asserts that it is impossible to extract a net amount of work from a thermodynamic system while leaving the system in the same state.

\begin{figure}[t]
  \centering
  \includegraphics[width=10cm]{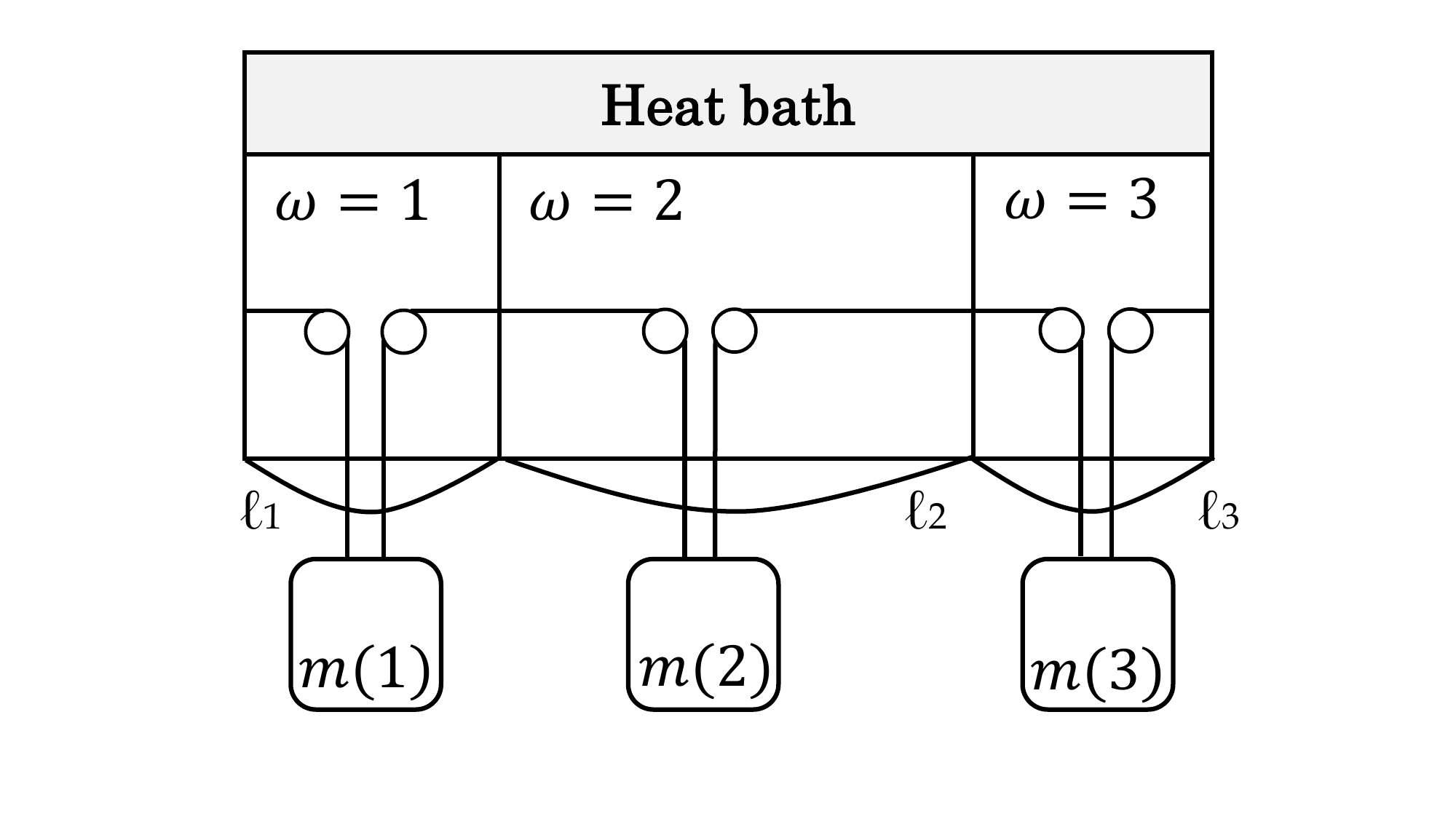}
  \caption{Generalized Szil\'ard's engine game having three chambers. The pulleys can move horizontally and are assumed to be negligibly small.}
  \label{Figure-Omega3}
\end{figure}

Extending the previous argument to the case when the outcome space $\Omega$ is an arbitrary finite set is straightforward.
Consider a device illustrated in Figure \ref{Figure-Omega3}, corresponding to the case when $\Omega =\{1,2,3\}$.
The cylinder contains two partitions, dividing it into three chambers labeled by $\omega= 1,2,3$.
Each partition is connected to two containers by inextensible strings and negligibly small pulleys that can move horizontally. Weights can be placed on these containers.
The containers correspond one-to-one with the chambers and are labeled accordingly.

A generalized Szilard's engine game for $\Omega =\{1,2, 3\}$ runs as follows: 
\begin{enumerate}[(i)]
\item Each of two partitions is placed at a specific position within the cylinder and fixed in place.
\item Scientist places a weight $m(a)$ on each container $a$ for $a = 1,2,3$.
\item Nature places a single molecule in one of the three chambers, announces its label $\omega$, and releases the partitions.
\item The molecule pushes the partitions at the boundaries of chamber $\omega$, causing the chamber to expand until all the partitions are pressed against the end(s) of the cylinder, and Scientist gains potential energy as follows:
If $\omega = 1$, the work extracted is
\[ m(1) g \frac{\ell_2+\ell_3}{2} - m(2) g \frac{\ell_2}{2} - m(3) g \frac{\ell_3}{2}. \]
If $\omega = 2$, the work extracted is
\[ m(2) g \frac{\ell_3+\ell_1}{2} - m(3) g \frac{\ell_3}{2} - m(1) g \frac{\ell_1}{2}. \] 
If $\omega = 3$, the work extracted is
\[ m(3) g \frac{\ell_1+\ell_2}{2} - m(1) g \frac{\ell_1}{2} - m(2) g \frac{\ell_2}{2}. \]
\item Once the partitions come to rest, they return to their original positions as in step (i). 
\end{enumerate}

\noindent
In a single round of the game, Scientist extracts the following amount of work:
\[
 \sum_{a= 1}^3 \frac{m(a) g}{2}  (\ell_1 +\ell_2 + \ell_3) \left(\delta_{\omega}(a) - \frac{\ell_a}{\ell_1 + \ell_2 + \ell_3} \right).
\]

Generalizing Szil\'ard's engine game to an arbitrary finite set $\Omega= \{1,2, \dots, A\}$ is straightforward: 
one simply increases the number of chambers illustrated in Figure~\ref{Figure-Omega3}.
Defining
\[ p(a):=\frac{\ell_a}{\ell_1 + \cdots + \ell_A}, \qquad (a\in\Omega), \]
we can formulate the generalized work-extracting protocol as follows.

\medskip
\begin{itembox}[l]{\bf Generalized Szil\'ard's engine game} \label{game:gene-Szilard}
\textbf{Players} : Scientist and Nature. \\
\textbf{Protocol} : $W_0 = 1$. \\
\ \ FOR $n \in \mathbb{Z}_{>0}$ : \\
\ \ \ \ \ Scientist announces $m_n \in \mathbb{R}^\Omega$. \\
\ \ \ \ \ Nature announces $\omega_n \in \Omega$. \\
\ \ \ \ \ $\displaystyle W_n := W_{n-1} + \sum_{a\in\Omega} \frac{m_n(a) g}{2} 
	(\ell_1 + \cdots + \ell_A) \left(\delta_{\omega_n}(a) - p(a)\right)$.\\
\ \ END FOR.
\end{itembox}

\medskip
Now, we extend Theorem \ref{SLLN:Szilard} to this generalized setting:

\begin{thm}\label{SLLN:gene-Szilard}
In generalized Szil\'ard's engine game, Scientist has a prudent strategy $\{m_n\}_n$ that ensures $\lim_{n \to \infty} W_n = \infty$ unless
\[
\lim_{n \to \infty} \frac{1}{n}\sum_{i=1}^n \delta_a (\omega_i) = p(a),\quad (\forall a\in\Omega).
\]
\end{thm}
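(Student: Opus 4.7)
The plan is to mirror the proof of Theorem \ref{SLLN:Szilard}: I would rewrite the capital recursion in the form of a simple predictive game and then invoke Theorem \ref{SLLN:predictive-Shafer_and_Vovk} directly. Concretely, I would introduce
\[
\alpha_n(a) := \frac{m_n(a)\, g\, (\ell_1 + \cdots + \ell_A)}{2\, W_{n-1}}, \qquad (a\in\Omega),
\]
so that the recurrence in the game box becomes
\[
W_n = W_{n-1}\left\{ 1 + \sum_{a\in\Omega} \alpha_n(a)\bigl(\delta_{\omega_n}(a) - p(a)\bigr)\right\},
\]
with $p(a) = \ell_a/(\ell_1 + \cdots + \ell_A)$. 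Since each $\ell_a > 0$, this $p$ lies in $\mathcal{P}(\Omega)$, and for any $W_{n-1} > 0$ the map $m_n \leftrightarrow \alpha_n$ is a bijection; moreover, prudence of $\{m_n\}$ is equivalent to prudence of $\{\alpha_n\}$, since both are equivalent to the bracketed factor staying positive.

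With this dictionary in place, I would apply Theorem \ref{SLLN:predictive-Shafer_and_Vovk} to this specific $p \in \mathcal{P}(\Omega)$ to obtain a prudent strategy $\alpha : \Omega^\ast \to \mathbb{R}^\Omega$ that forces $\lim_{n\to\infty} W_n = \infty$ unless $\frac{1}{n}\sum_{i=1}^n \delta_a(\omega_i) \to p(a)$ for every $a \in \Omega$. Translating back through $m_n(a) := 2 W_{n-1}\, \alpha_n(a)/\bigl(g(\ell_1 + \cdots + \ell_A)\bigr)$ then yields the strategy Scientist needs in the generalized Szil\'ard engine game. Equivalently, since the simple predictive game is the $k=0$ specialization of the time-homogeneous Markovian predictive game with constant kernel $M\equiv p$, the same conclusion follows from Theorem \ref{thm:kth_Markov_normal_number}.

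I do not anticipate any real obstacle here: the argument is purely a change of variables, and the statement is essentially a corollary of Theorem \ref{SLLN:predictive-Shafer_and_Vovk}. The only conceptual point worth flagging, just as in the binary case, is that realizing arbitrary real $\alpha_n(a)$ forces $m_n(a)$ to range over all of $\mathbb{R}$; permitting ``negative weights'' is standard in this formalism and simply reflects that Scientist may configure the apparatus either to gain or to lose potential energy in a given round, betting on both possibilities.
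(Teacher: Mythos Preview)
Your proposal is correct and is exactly the approach the paper intends: the paper proves the binary Theorem~\ref{SLLN:Szilard} by the identical change of variables and appeal to Theorem~\ref{SLLN:predictive-Shafer_and_Vovk}, and then states Theorem~\ref{SLLN:gene-Szilard} without a separate proof, leaving the obvious extension (with the extra factor $1/2$ you correctly track) to the reader.
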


We can further generalize the game described above by allowing the chamber size ratios $p(a)$ to vary in each round $n$, introducing a Forecaster who announces these ratios.
In this extended protocol, Theorem \ref{thm:kth_Markov_normal_number} provides a generalization of Theorem \ref{SLLN:gene-Szilard}, incorporating a time-homogeneous finite-order Markovian Forecaster.

\subsection{Entropy}

Given the pivotal role of universal coding schemes in establishing game-theoretic law of large numbers, it is natural to expect that the protocol of a predictive game is also intertwined with the concept of entropy.
The next proposition formalizes this connection, where we continue to assume that $M$ is a $k$th-order Markov kernel with strictly positive entries and $\pi$ is the stationary distribution of $M$.

\begin{prop}\label{prop:Markov_entropy}
In the time-homogeneous $k$th-order Markovian predictive game,
Skeptic has a prudent strategy $\alpha : \Omega^\ast \rightarrow \mathbb{R}^\Omega$ that ensures $\lim_{n \to \infty} K_n = \infty$ unless
\begin{equation}\label{eqn:compressionRate}
 \lim_{n \to \infty} \frac{-\log Q_{LZ}(\omega_1^n)}{n} = H(M),
\end{equation}
where 
\[
H(M):= -\sum_{a_1^k \in \Omega^k} \pi(a_1^k) \sum_{b \in \Omega} M(b \mid a_1^k) \log M(b \mid a_1^k)
\]
is the entropy rate. 
\end{prop}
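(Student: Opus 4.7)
The plan is to use the Lempel–Ziv betting strategy $Q_{LZ}$ itself, upgraded via Lemma \ref{lem:limsup} so that $\limsup_n K_n = \infty$ already forces $\lim_n K_n = \infty$. I would then show that if the resulting capital does \emph{not} diverge, then the entropy-rate limit \eqref{eqn:compressionRate} must hold.

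First, suppose $\lim_n K_n \neq \infty$, equivalently (after the Lemma \ref{lem:limsup} upgrade) $\limsup_n K_n < \infty$. Running the argument of Theorem \ref{thm:kth_Markov_normal_number}, which is proven with exactly this Lempel–Ziv strategy, gives $T_n(a_1^\ell)/n \to P(a_1^\ell)$ for every $\ell \geq k$ and $a_1^\ell \in \Omega^\ell$; in particular $T_n(a_1^k)/n \to \pi(a_1^k) > 0$ and $\hat{M}_n^k(b \mid a_1^k) \to M(b \mid a_1^k)$. Next I would establish that $-\log \tilde{P}(\omega_1^n)/n \to H(M)$. Applying Lemma \ref{lem:lth_Markov_divergence} with $\ell = k$ and dividing by $n$, the boundary term is $O(1)/n$, and the continuity of the Kullback–Leibler divergence at $M$ (whose entries are strictly positive) ensures that each $D(\hat{M}_n^k(\cdot \mid a_1^k)\,\|\,M(\cdot \mid a_1^k))$ tends to $0$, so the weighted divergence sum is $o(n)$. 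A direct computation of $-\log \hat{R}_n^k(\omega_1^n)/n$ as a sum over $(k+1)$-strings weighted by $T_n(a_1^{k+1})/n \to P(a_1^{k+1})$ then produces the limit $H(M)$.

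To conclude, I would invoke \eqref{eqn:capital} in the form $-\log Q_{LZ}(\omega_1^n) = -\log \tilde{P}(\omega_1^n) - \log K_n$ and show $\log K_n / n \to 0$. The upper bound $\limsup_n \log K_n/n \leq 0$ is immediate from $\limsup_n K_n < \infty$. For the reverse inequality, combining Lemmas \ref{lem:liminf_comp_and_Q} and \ref{lem:Ziv_ineq} (with $\ell = k$, and using the limit $-\log \hat{R}_n^k/n \to H(M)$ obtained above) yields $\limsup_n -\log Q_{LZ}(\omega_1^n)/n \leq H(M)$; subtracting the limit $-\log \tilde{P}(\omega_1^n)/n \to H(M)$ gives $\liminf_n \log K_n/n \geq 0$. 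Hence $\log K_n/n \to 0$, and \eqref{eqn:compressionRate} follows at once. The main obstacle I anticipate is this very lower bound on $\log K_n/n$: although $K_n$ is bounded above, a priori it could collapse toward $0$ so that $\log K_n/n$ drifts to $-\infty$; ruling this out is precisely where the universal-compression character of the Lempel–Ziv strategy (through Lemmas \ref{lem:liminf_comp_and_Q} and \ref{lem:Ziv_ineq}) is indispensable.
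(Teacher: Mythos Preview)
Your proposal is correct and follows essentially the same route as the paper: use the Lempel--Ziv strategy (with the Lemma~\ref{lem:limsup} upgrade), invoke the conclusions of Theorem~\ref{thm:kth_Markov_normal_number} to get $T_n(a_1^k)/n\to\pi(a_1^k)$ and $\hat{M}_n^k\to M$, deduce $-\tfrac{1}{n}\log\tilde P(\omega_1^n)\to H(M)$, and combine with $\log K_n/n\to 0$. The only cosmetic difference is that the paper asserts $\log K_n/n\to 0$ by pointing back to the lower bound already obtained in the proof of Lemma~\ref{lem:SLLN:kth_Markov} (which packages Lemmas~\ref{lem:liminf_comp_and_Q}, \ref{lem:Ziv_ineq}, \ref{lem:lth_Markov_divergence} together), whereas you unpack the same inequalities explicitly via the intermediate bound $\limsup_n(-\log Q_{LZ}(\omega_1^n))/n\le H(M)$; the ingredients and logic are identical.
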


\begin{proof}
We observed in the proof of Theorem \ref{thm:kth_Markov_normal_number} that, under the Lempel-Ziv betting strategy $Q_{LZ}$, the boundedness of the capital process, i.e., $\limsup_{n \to \infty} K_n < \infty$, guarantees not only that
\[
 \frac{\log K_n}{n}=\frac{1}{n} \left\{ -\log \tilde{P}(\omega_1^n) - ( -\log Q_{LZ}(\omega_1^n) ) \right\} \longrightarrow 0
\]
but also that
\[
\hat{M}_n^k (\;\cdot \mid a_1^k) \longrightarrow M(\;\cdot \mid a_1^k)
 \quad\mbox{and}\quad
 \frac{T_n(a_1^k)}{n} \longrightarrow \pi(a_1^k)
\]
for all $a_1^k\in\Omega^k$ as $n\to\infty$.

As a consequence, using a similar computation as in the proof of Lemma \ref{lem:lth_Markov_divergence}, we obtain
\begin{align*}
- \frac{1}{n}\log \tilde{P}(\omega_1^n)
&= -\sum_{a_1^k \in \Omega^k} \frac{T_n(a_1^k)}{n} \sum_{b \in \Omega} \hat{M}_n^k (b \mid a_1^k) \log M(b \mid a_1^k) \\
&\quad\;  +\frac{1}{n} \left\{-\log \tilde{P}(\omega_1^{k}) + \log \prod_{i= 1}^{k} M(\omega_i \mid \omega_{n-k+i}^n \omega_1^{i-1}) \right\} \\
& \longrightarrow H(M).
\end{align*}
Combining these asymptotic properties, \eqref{eqn:compressionRate} follows immediately. 
\end{proof}

The implication of Proposition \ref{prop:Markov_entropy} is as follows: To prevent Skeptic from becoming infinitely rich, Reality must ensure that the asymptotic compression rate of its moves coincides with the entropy rate.

Note that Proposition \ref{prop:Markov_entropy} bears a close resemblance to Lempel-Ziv's theorem \cite{LZ-78}
\[ \lim_{n \to \infty} \frac{\ell_{LZ}(\omega_1^n)}{n} = H_A (P),\quad \mbox{$P$-a.s.} \]
as well as Brudno's theorem \cite{Brudno}
\[ \lim_{n \to \infty} \frac{\mathcal{K}(\omega_1^n)}{n} = H_A (P),\quad \mbox{$P$-a.s.} \]
for the prefix Kolmogorov complexity $\mathcal{K}(\omega_1^n)$
when data $\omega_1^n$ are drawn according to a stationary ergodic probability measure $P$ on $\Omega^\infty$, where
\[ H_A (P):= \lim_{n \to \infty} \mathbb{E} \left[-\frac{1}{n} \log_A P(\omega_1^n) \right] \]
is the entropy rate to the base $A$. 
In addition, the last asymptotic property in the proof of Proposition \ref{prop:Markov_entropy} corresponds to the Shannon-McMillan-Breiman theorem \cite{CoverThomas}
\[
\lim_{n\to\infty} \left\{- \frac{1}{n} \log P(\omega_1^n)\right\} = H(P),\quad \mbox{$P$-a.s.}
\]
which also has a counterpart in algorithmic randomness theory \cite{{Vyugin:1998},{Nakamura},{Hochman}}. 
These observations prompt us to call a Forecaster {\it stationary ergodic} if they announce predictions according to the prescription 
\[ 
p_n(\omega_n) := P(\omega_n \mid \omega_1^{n-1}), 
\quad (\omega_1^n \in\Omega^n),
\]
where $P$ represents a predetermined stationary ergodic probability measure. 

If we were to discover a compression algorithm capable of efficiently compressing Reality's moves within the game-theoretic context, we could then define the entropy of a game as the asymptotic data compression rate, assuming that Reality faithfully follows the predictions of the stationary ergodic Forecaster, thereby preventing Skeptic from becoming infinitely rich. 

However, the direction of the above definition of the stationary ergodic Forecaster may not be fully satisfactory from the perspective of Dawid's prequential principle \cite{Dawid_prequential_1984}.
The validation and exploration of the concepts of game entropy and a stationary ergodic Forecaster remain topics for future investigation.

\section{Concluding remarks}

In this paper, we established a generalization of the game-theoretic law of large numbers in a time-homogeneous $k$th-order Markovian predictive game. By constructing a Lempel-Ziv-inspired strategy based on incremental parsing and the martingale properties of the game, we provided new insights into the relationship between game-theoretic randomness and coding theory.

We also explored applications to thermodynamics by formulating a game-theoretic version of Szilard's engine.
Our results demonstrated that Nature must behave stochastically, satisfying the law of large numbers, to avoid violating the second law of thermodynamics. Furthermore, we introduced the concept of entropy in predictive games, associating it to the codelength of universal coding.

Despite these advances, several important challenges remain. For instance, integrating additional thermodynamic concepts such as thermal equilibrium, thermal contact, temperature, and free energies into a game-theoretic framework remains a significant open problem. Additionally, extending the framework to non-Markovian processes could provide deeper insights into the dynamics of predictive games.

\section*{Acknowledgements}

We would like to express our gratitude to Professor Kenshi Miyabe for insightful discussions.
AF wishes to convey deep appreciation to the late Professor Tom Cover for his enlightening and thought-provoking discussions on Maxwell's demon.
The present study was supported by JSPS KAKENHI Grant Numbers 22340019, 17H02861, 23H01090, and 23K25787.

\appendix
\section*{Appendix}
\setcounter{equation}{0}
\setcounter{thm}{0}
\setcounter{footnote}{1}
\addcontentsline{toc}{section}{Appendix}
\renewcommand{\thesubsection}{\Alph{subsection}}
\renewcommand{\thethm}{\Alph{subsection}.\arabic{thm}} 
\renewcommand{\theequation}{\Alph{subsection}.\arabic{equation}}

\subsection{Proof of lemmas} \label{app:Lemmas}

In this appendix, we provide detailed proofs of the lemmas stated in Section \ref{subsec:lemma}.

\subsubsection{Proof of Lemma \ref{lem:liminf_comp_and_Q}}

Since
\begin{align*}
-\log Q_{LZ}(\omega_1^n) 
&= -\log Q_{LZ}(\omega_1^{n_T} \omega_{n_T +1}^n) 
= -\log Q_{LZ}(\omega_1^{n_T}) - \log Q_{LZ}(\omega_{n_T +1}^n \mid \omega_1^{n_T})
\end{align*}
for $n > n_T$, it follows from \eqref{eqn:evalQ_LZ} and \eqref{eqn:LZcodelength} that, for any $n$,
\begin{align*}
-\log Q_{LZ}(\omega_1^n) 
&\leq
 -\log Q_{LZ}(\omega_1^{n_T}) + \log |V(\omega_1^{n_T})|= 
\sum_{j=0}^{c(\omega_1^n)} \log (A +j(A-1)) \\
&< \sum_{j=0}^{c(\omega_1^n)} \log (A+c(\omega_1^n)(A-1))
= (c(\omega_1^n)+1) \log (A + c(\omega_1^n)(A-1)).
\end{align*}
Consequently, 
\begin{align*}
&\frac{1}{n}\left\{c(\omega_1^n)\log c(\omega_1^n) - (-\log Q_{LZ}(\omega_1^n)) \right\} \\
&\ \ > \frac{1}{n}\left\{ 
c(\omega_1^n)\log c(\omega_1^n) -(c(\omega_1^n)+1) \log (A + c(\omega_1^n)(A-1)) \right\} \\
&\ \ = \frac{c(\omega_1^n)}{n} \log \frac{c(\omega_1^n)}{ A + c(\omega_1^n)(A-1)} - \frac{\log (A + c(\omega_1^n)(A-1))}{n}.
\end{align*}
Thus, the following Lemma \ref{lem:comp_upper_bound} proves the claim.

\begin{lem}\label{lem:comp_upper_bound}
For sufficiently large $n$ and for all $\omega_1^n \in \Omega^n$,
\[
(0<) \ c(\omega_1^n) < \frac{n}{(1- \varepsilon_n) \log_A n},
\]
where $\varepsilon_n \rightarrow 0$ as $n \to \infty$.
Specifically, $c(\omega_1^n)/n \rightarrow 0$ as $n \to \infty$.
\end{lem}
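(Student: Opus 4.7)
\textbf{Proof plan for Lemma \ref{lem:comp_upper_bound}.}
The key structural fact is that all parsed phrases of $\omega_1^n$, except possibly the last incomplete one, are pairwise distinct nonempty words in $\Omega^\ast$. So the problem reduces to a purely combinatorial one: given that $c$ distinct nonempty words over $\Omega$ have total length at most $n$, how large can $c$ be? My plan is to reduce to this combinatorial question, bound $c$ greedily by using the shortest available words, and then invert the resulting asymptotic.

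Concretely, let $c = c(\omega_1^n)$ and let $c_\ell$ be the number of parsed phrases of length exactly $\ell$. Since there are only $A^\ell$ distinct words of length $\ell$ over $\Omega$, the distinctness of the phrases gives $c_\ell \le A^\ell$, while the fact that the phrases exactly partition $\omega_1^n$ gives $\sum_{\ell \ge 1} \ell\, c_\ell = n$ (the possible last incomplete phrase only removes length, so it can only help). To upper-bound $c = \sum_\ell c_\ell$, I would show that, subject to $c_\ell \le A^\ell$ and $\sum_\ell \ell c_\ell \le n$, the maximum of $c$ is attained by the ``fill the shortest lengths first'' allocation. Formally, define $L = L(n)$ as the largest integer with
\[
\sum_{\ell = 1}^{L} \ell A^\ell \le n,
\]
so that $c \le \sum_{\ell=1}^{L} A^\ell + \bigl\lfloor (n - \sum_{\ell=1}^{L} \ell A^\ell)/(L+1) \bigr\rfloor \le \frac{n}{L+1} + \sum_{\ell=1}^{L} A^\ell$.

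Next I would carry out the asymptotics. Using the identity $\sum_{\ell=1}^{L} \ell A^\ell = \frac{L A^{L+2} - (L+1)A^{L+1} + A}{(A-1)^2}$, one has $\sum_{\ell=1}^{L} \ell A^\ell \sim \frac{L A^{L+1}}{A-1}$ as $L\to\infty$, while $\sum_{\ell=1}^L A^\ell \le \frac{A^{L+1}}{A-1}$. The defining inequality for $L$ therefore gives $L = \log_A n - \log_A \log_A n + O(1)$, and substituting back,
\[
c \le \frac{n}{L+1}\left(1 + \frac{(L+1)\sum_{\ell=1}^{L} A^\ell}{n}\right) \le \frac{n}{L+1}\left(1 + \frac{L+1}{L}\cdot\frac{A-1}{L}\cdot\frac{1}{1 - o(1)}\right),
\]
using $n \ge \sum_{\ell=1}^{L} \ell A^\ell \ge \frac{L A^{L+1}}{A-1}(1-o(1))$, so the second factor equals $1 + o(1)$. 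With $L+1 = (1-o(1))\log_A n$, this delivers
\[
c(\omega_1^n) \le \frac{n}{(1-\varepsilon_n)\log_A n},
\]
with $\varepsilon_n \to 0$ uniformly in $\omega_1^n$. The second clause $c(\omega_1^n)/n \to 0$ follows immediately.

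The main obstacle is purely bookkeeping in the asymptotic expansion: extracting the cleanest explicit form of $\varepsilon_n$ (one wants something like $\varepsilon_n = O(\log\log n / \log n)$) requires care in estimating $L$ and in handling the remainder term $n - \sum_{\ell=1}^{L} \ell A^\ell$ when inverting the relation between $L$ and $n$. The conceptual steps, however, are straightforward, and the argument is the standard $A$-ary adaptation of the Lempel--Ziv phrase-count bound (cf.\ Cover--Thomas, Lemma~13.5.3).
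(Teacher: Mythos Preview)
Your approach is exactly what the paper invokes: its entire proof is ``See Lemma~13.5.3 of \cite{CoverThomas},'' and you have written out the $A$-ary version of that counting argument. One bookkeeping slip to fix: the relaxed inequality $c \le \frac{n}{L+1} + \sum_{\ell=1}^{L} A^\ell$ discards too much, since $\sum_{\ell\le L} A^\ell \sim A^{L+1}/(A-1)$ is of the \emph{same} order as $n/(L+1)$, so the displayed factor $1+\frac{L+1}{L}\cdot\frac{A-1}{L}\cdot\frac{1}{1-o(1)}$ is not what that bound actually yields (it yields $1+\Theta(1)$, hence only $c\lesssim \frac{2n}{\log_A n}$). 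Keep instead the unrelaxed form
\[
c \;\le\; \frac{n}{L+1} + \frac{1}{L+1}\sum_{\ell=1}^{L} (L+1-\ell)\,A^\ell
\;=\; \frac{n}{L+1} + O\!\left(\frac{A^{L+1}}{L}\right)
\;=\; \frac{n}{L+1}\bigl(1+O(1/L)\bigr),
\]
and then the conclusion follows exactly as you state.
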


\begin{proof}
See Lemma 13.5.3 of \cite{CoverThomas}.
\end{proof}

\subsubsection{Proof of Lemma \ref{lem:cond_empirical}}

The first identity follows from
\begin{align*}
\sum_{b \in \Omega} T_n(b \mid a_1^\ell)
&= \sum_{b \in \Omega} (\text{number of occurrences of $a_1^\ell b$ in $\omega_1^n \omega_1^\ell$}) \\
&= (\text{number of occurrences of $a_1^\ell$ in $\omega_1^n \omega_1^{\ell-1}$}) = T_n(a_1^\ell).
\end{align*}
On the other hand, observe that
\begin{align*}
\sum_{a_1 \in \Omega} T_n(b \mid a_1^\ell)
&= \sum_{a_1 \in \Omega} (\text{number of occurrences of $a_1^\ell b$ in $\omega_1^n \omega_1^\ell$}) \\
&= (\text{number of occurrences of $a_2^\ell b$ in $\omega_2^n \omega_1^{\ell}$}) \\
&= (\text{number of occurrences of $a_2^\ell b$ in $\omega_1^n \omega_1^{\ell-1}$}) + \Delta_1 + \Delta_\ell, 
\end{align*}
where
\begin{align*}
\Delta_1 
&:= (\text{adjustment for the effect of adding $\omega_1$ to the head of $\omega_2^n \omega_1^{\ell}$}) \\
&=\begin{cases}
-1 &\ \ (\omega_1^\ell = a_2^\ell b), \\
0 &\ \ (\text{otherwise}),
\end{cases}
\end{align*}
and
\begin{align*}
\Delta_\ell 
&:= (\text{adjustment for the effect of removing $\omega_\ell$ from the tail of $\omega_2^n \omega_1^{\ell}$}) \\
&=\begin{cases}
1 &\ \ (\omega_1^\ell = a_2^\ell b), \\
0 &\ \ (\text{otherwise}).
\end{cases}
\end{align*}
Since $\Delta_1 + \Delta_\ell = 0$, the second identity holds.

\subsubsection{Proof of Lemma \ref{lem:kth_Markov_stationary}}

The assumption $\hat{M}_n^k \to M$ ensures that for sufficiently large $n$,  we have $\hat{M}_n^k(b \mid a_1^k)>0$ for all $a_1^k \in \Omega^k$ and $b \in \Omega$.
For $\omega^k \in \Omega^k$, define $\hat{q}_n(\omega^k) := T_n(\omega^k)/n$.
Then, by Lemma \ref{lem:cond_empirical}, the empirical distribution $\hat{q}_n$ is stationary under the $k$th Markov kernel $\hat{M}_n^k$ \cite{Davisson-Longo-Sgarro}:
\[
\sum_{a_1 \in \Omega} \hat{M}_n^k (b \mid a_1^k) \hat{q}_n(a_1^k) 
= \sum_{a_1 \in \Omega} \frac{T_n(b \mid a_1^k)}{T_n(a_1^k)} \frac{T_n(a_1^k)}{n}
= \frac{1}{n} \sum_{a_1 \in \Omega} T_n(b \mid a_1^k) = \frac{T_n(a_2^k b)}{n} = \hat{q}_n(a_2^k b).
\]
Thus, by Lemma \ref{lem:kth_unique_stationary_distribution} in Appendix \ref{app:stationary}, $\hat{q}_n$ is the unique stationary distribution of Markov matrix $\hat{M}_n^k$.

Consider a convergent subsequence $\{\hat{q}_{n_i} -\pi \}_i$ of $\{\hat{q}_n -\pi\}_n$, which converges to some $r \in [-1,1]^{\Omega^k}$.
By the assumption $\hat{M}_n^k \rightarrow M$, we obtain
\begin{align*}
r &= \lim_{i \to \infty} (\hat{q}_{n_i}- \pi ) = \lim_{i \to \infty} ( \hat{M}_{n_i}^k \hat{q}_{n_i} - M \pi ) \\
&=\lim_{i \to \infty} \left\{ \hat{M}_{n_i}^k (\hat{q}_{n_i} - \pi) + (\hat{M}_{n_i}^k - M) \pi \right\} = M r.
\end{align*}
By the Perron-Frobenius theorem (Section 4.4 of \cite{Gallager}), 
there exists a constant $c \in \mathbb{R}$ satisfying $r = c \pi$.
Furthermore,
\[
\sum_{a_1^k \in \Omega^k} r(a_1^k) = \lim_{i \to \infty} \sum_{a_1^k \in \Omega^k} (\hat{q}_{n_i}(a_1^k) - \pi(a_1^k)) = 0.
\]
Thus, we must have $c = 0$, completing the proof.

\subsubsection{Proof of Lemma \ref{lem:Ziv_ineq}}

The proof is almost the same as that of Ziv's inequality and the asymptotical optimality of the Lempel-Ziv algorithms \cite[Section 13.5.2]{CoverThomas}.
Suppose the sequence $\omega_1^n$ is parsed into $C_n$ distinct substrings as 
\[ \omega_1^n = /\omega_{n_0 +1}^{n_1} /\omega_{n_1 +1}^{n_2} /\cdots / \omega_{n_{C_n-1} +1}^{n_{C_n}}, \]
where $n_0 =0$ and $n_{C_n}= n$.
For example, if $\omega_1^n$ is parsed using the incremental parsing algorithm as
\[ 
 \omega_1^n = /\omega_{n_0 + 1}^{n_1} / \omega_{n_1 + 1}^{n_2} / \cdots /\omega_{n_{T-1} +1}^{n_T} /\omega_{n_T +1}^{n},
\]
we define $C_n = T$ and set $\omega_{n_{C_n-1} +1}^{n_{C_n}} :=  \omega_{n_{T-1} +1}^{n_T} \omega_{n_T +1}^{n}$.

Now, define $s_i := \omega_{i-\ell}^{i-1}$ for $\ell+1 \leq i \leq n$, and extend them cyclically for $1\le i \le\ell$ as follows:
\[
 s_1 := \omega_{n-\ell+1}^n,\;\; s_2 := \omega_{n-\ell+2}^n \omega_1,\;\; \dots, \;\; s_\ell := \omega_n \omega_1^{\ell-1}.
\]
For $m \in \mathbb{Z}_{>0}$ and $s \in \Omega^\ell$, let $c_{m,s}$ denote the number of occurrences of the word $\omega_{n_{j-1} +1}^{n_j}$ of length $m$ such that $s_{n_{j-1}+1} = \omega_{n_{j-1}-\ell+1}^{n_{j-1}} =s$ among the $C_n$ substrings $\omega_{n_0 +1}^{n_1}, \omega_{n_1 +1}^{n_2}, \dots, \omega_{n_{C_n -1} +1}^{n_{C_n}}$, i.e.,
\[
c_{m,s} := \left| \left\{ j \in \{1,2,\dots, C_n\} \left| \;|\omega_{n_{j-1} +1}^{n_j}| = m,\; s_{n_{j-1} +1} = s \right. \right\}\right|.
\]
Letting $\mathcal{J}:=\{ (m,s)\in \mathbb{Z}_{>0}\times\Omega^\ell \mid  c_{m,s} > 0\}$,
we have 
\[
\sum_{(m,s)\in \mathcal{J}} c_{m,s} = C_n \;\;\text{ and }\;\;
\sum_{(m,s)\in \mathcal{J}} m \cdot c_{m,s} = n.
\]
With a slight abuse of notation, we define, for any $m \in \mathbb{Z}_{>\ell}$ and $a_1^m \in \Omega^m$,
\[
\hat{M}_n^\ell (a_{\ell+1}^m \mid a_1^\ell) :=\prod_{i = \ell+1}^m \hat{M}_n^\ell (a_i \mid a_{i-\ell}^{i-1}).
\]
Then, we can evaluate $\log \hat{R}_n^\ell(\omega_1^n)$ as follows:
\begin{align*}
\log \hat{R}_n^\ell(\omega_1^n)
&=\sum_{j=1}^{C_n} \log \hat{M}_n^\ell(\omega_{n_{j-1} +1}^{n_j} \mid s_{n_{j-1} +1}) \\
&=\sum_{(m,s)\in \mathcal{J}} \; \sum_{\substack{j : n_j - n_{j-1} = m, \\ s_{n_{j-1} +1}= s }}\log \hat{M}_n^\ell(\omega_{n_{j-1} +1}^{n_j} \mid s_{n_{j-1} +1}) \\
&=\sum_{(m,s)\in \mathcal{J}} c_{m,s} \; \sum_{\substack{j : n_j - n_{j-1} = m, \\ s_{n_{j-1} +1}= s }} \frac{1}{c_{m,s}} \log \hat{M}_n^\ell(\omega_{n_{j-1} +1}^{n_j} \mid s_{n_{j-1} +1}) \\
&\leq \sum_{(m,s)\in \mathcal{J}} c_{m,s} \log \left( \frac{1}{c_{m,s}} \sum_{\substack{j : n_j - n_{j-1} = m, \\ s_{n_{j-1} +1}= s }} \hat{M}_n^\ell(\omega_{n_{j-1} +1}^{n_j} \mid s_{n_{j-1} +1}) \right).
\end{align*}
In the last inequality, we used Jensen's inequality.
Since the parsed substrings $\{\omega_{n_{j-1} +1}^{n_j}\}_{1\le j \le C_n}$ are distinct, we have
\[
\sum_{\substack{j : n_j - n_{j-1} = m, \\ s_{n_{j-1} +1}= s }} \hat{M}_n^\ell(\omega_{n_{j-1} +1}^{n_j} \mid s_{n_{j-1} +1}) \leq 1
\]
for all $(m,s)\in\mathcal{J}$. 
As a consequence,
\begin{align*}
\log \hat{R}_n^\ell(\omega_1^n)
&\leq - \sum_{(m,s)\in \mathcal{J}} c_{m,s} \log c_{m,s} \\
&= -c(\omega_1^n) \log c(\omega_1^n) - c(\omega_1^n) \sum_{(m,s)\in \mathcal{J}} \frac{c_{m,s}}{c(\omega_1^n)} \log \frac{c_{m,s}}{c(\omega_1^n)},
\end{align*}
where $c(\omega_1^n)=C_n$.
Writing $\pi_{m,s} := c_{m,s}/c(\omega_1^n)$, we have
\[
\sum_{(m,s)\in \mathcal{J}}  \pi_{m,s} = 1 \;\; \text{ and }\;\;  \sum_{(m,s)\in \mathcal{J}}  m\cdot \pi_{m,s} = \frac{n}{c(\omega_1^n)}. 
\]

We now define the random variables $U$ and $V$ as follows:
\[
\mathrm{Pr}(U = m,\ V = s) := \pi_{m,s}.
\]
From the above bound on $\log \hat{R}_n^\ell(\omega_1^n)$, it follows that
\[
-\frac{1}{n} \log \hat{R}_n^\ell(\omega_1^n)
\geq \frac{c(\omega_1^n)}{n} \log c(\omega_1^n) - \frac{c(\omega_1^n)}{n} H(U,V),
\]
where
\[
H(U,V) := - \sum_{(m,s)\in \mathcal{J}} \; \frac{c_{m,s}}{c(\omega_1^n)} \log \frac{c_{m,s}}{c(\omega_1^n)}.
\]
By the subadditivity of entropy, we have
\[
H(U,V) \leq H(U) + H(V).
\]
Since the expectation of $U$ is given by
\[ \mathbb{E}[U] = \frac{n}{c(\omega_1^n)}, \]
applying Lemma \ref{lem:integer-valued} below, we can bound $H(U)$ as
\begin{align*}
H(U)
&\leq (\mathbb{E}[U] +1)\log(\mathbb{E}[U] +1) - (\mathbb{E}[U])\log(\mathbb{E}[U]) \\
&= \log \frac{n}{c(\omega_1^n)} + \left( \frac{n}{c(\omega_1^n)} +1 \right)\log\left( \frac{c(\omega_1^n)}{n} +1 \right).
\end{align*}
On the other hand, since $H(V) \leq \log |\Omega|^\ell = \ell \log A$, we obtain
\begin{align*}
\delta_{\ell}(n) 
&:=
\frac{c(\omega_1^n)}{n} H(U,V) \\
&\leq \frac{c(\omega_1^n)}{n} \log \frac{n}{c(\omega_1^n)} 
 + \left( 1 +\frac{c(\omega_1^n)}{n} \right)\log\left( \frac{c(\omega_1^n)}{n} +1 \right) + \frac{c(\omega_1^n)}{n} \ell \log A .
\end{align*}
Since $c(\omega_1^n)/n \rightarrow 0$ as $n \to \infty$ by Lemma \ref{lem:comp_upper_bound}, it follows that $\delta_\ell(n) \rightarrow 0$ as $n \to \infty$.
This completes the proof.

\begin{lem}\label{lem:integer-valued}
Let $Z$ be a nonnegative integer-valued random variable with mean $\mu$. Then the entropy $H(Z)$ is bounded by
\[
H(Z) \leq (\mu + 1)\log(\mu +1) - \mu \log \mu.
\]
\end{lem}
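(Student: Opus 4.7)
The plan is to recognize this as a maximum entropy problem: among all probability distributions on $\{0,1,2,\dots\}$ with fixed mean $\mu$, the entropy is maximized by the geometric distribution, and the stated bound equals that maximum entropy. So the cleanest route is to compare $Z$ with a geometric reference distribution via the non-negativity of Kullback-Leibler divergence.

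Concretely, I would set $q_k := \frac{1}{\mu+1}\left(\frac{\mu}{\mu+1}\right)^k$ for $k\in\mathbb{Z}_{\ge 0}$, which is a probability mass function with mean $\mu$ (assuming $\mu>0$; the case $\mu=0$ is handled separately since then $Z=0$ almost surely and both sides vanish under the convention $0\log 0 = 0$). Let $p_k := \mathrm{Pr}(Z=k)$. Then I invoke $D(p \| q) = \sum_k p_k \log(p_k/q_k) \ge 0$, rearranged as
\begin{equation*}
H(Z) = -\sum_k p_k \log p_k \le -\sum_k p_k \log q_k.
\end{equation*}

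The key step is to evaluate the right-hand side in closed form. Substituting the definition of $q_k$ gives
\begin{align*}
-\sum_k p_k \log q_k
&= -\sum_k p_k \left[ k \log\frac{\mu}{\mu+1} - \log(\mu+1) \right] \\
&= -\mu \log\frac{\mu}{\mu+1} + \log(\mu+1) \\
&= \mu \log(\mu+1) - \mu\log\mu + \log(\mu+1) \\
&= (\mu+1)\log(\mu+1) - \mu\log\mu,
\end{align*}
using only $\sum_k p_k = 1$ and $\sum_k k p_k = \mu$. Combined with the KL inequality, this delivers the claimed bound.

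There is no real obstacle here; the only minor subtlety is the degenerate case $\mu = 0$ (and, if one is being pedantic, checking that the series defining $\sum_k p_k \log q_k$ makes sense even when some $p_k=0$, which it does under the standard $0\log 0 =0$ convention). The proof is essentially a one-liner once one writes down the correct geometric reference distribution, and the identity $\sum_k k q_k = \mu$ is exactly what forces the coefficient of $\log\frac{\mu}{\mu+1}$ in the computation above to collapse to $-\mu$.
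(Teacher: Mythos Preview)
Your argument is correct: the geometric distribution $q_k=\frac{1}{\mu+1}\bigl(\frac{\mu}{\mu+1}\bigr)^k$ is exactly the maximum-entropy law on $\mathbb{Z}_{\ge 0}$ with mean $\mu$, and the nonnegativity of $D(p\|q)$ together with $\sum_k p_k=1$, $\sum_k kp_k=\mu$ yields the bound immediately, with the $\mu=0$ case trivial. The paper does not give its own proof but simply cites Lemma~13.5.4 of Cover--Thomas, whose argument is precisely the one you have written out.
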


\begin{proof}
See Lemma 13.5.4 of \cite{CoverThomas}.
\end{proof}

\subsubsection{Proof of Lemma \ref{lem:lth_Markov_divergence}}

For $n, \ell\in\mathbb{Z}_{>0}$ satisfying $n > \ell \geq k$,
\begin{align*}
&-\log \tilde{P}(\omega_1^n) + \log \hat{R}^{\ell}_n (\omega_1^n) \\
&\quad = -\log \tilde{P}(\omega_1^\ell) -\log \tilde{P}(\omega_{\ell+1}^n \mid \omega_1^\ell) + \log \hat{R}^{\ell}_n (\omega_1^n) \\
&\quad = -\log \tilde{P}(\omega_1^{\ell}) - \log \prod_{i= \ell+1}^n M(\omega_i \mid \omega_{i-k}^{i-1}) + \log \hat{R}^{\ell}_n (\omega_1^n) \\
&\quad =  -\log \tilde{P}(\omega_1^{\ell}) + \log \prod_{i= 1}^{\ell} M(\omega_i \mid \omega_{n-k+i}^n \omega_1^{i-1}) \\
& \qquad -\log \prod_{i= 1}^{\ell} M(\omega_i \mid \omega_{n-k+i}^n \omega_1^{i-1}) -\log \prod_{i= \ell+1}^n M(\omega_i \mid \omega_{i-k}^{i-1}) + \log \hat{R}^{\ell}_n (\omega_1^n) \\
&\quad =  -\log \tilde{P}(\omega_1^{\ell}) + \log \prod_{i= 1}^{\ell} M(\omega_i \mid \omega_{n-k+i}^n \omega_1^{i-1}) \\
& \qquad - \sum_{a_1^{\ell} \in \Omega^{\ell}} \sum_{b \in \Omega} T_n(b \mid a_1^{\ell}) \log M(b \mid a_{\ell-k+1}^{\ell})  + \log \hat{R}^{\ell}_n (\omega_1^n).
\end{align*}
In the last equality, we used the fact that 
\[ 
M(\omega_i \mid \omega_{n-k+i}^n \omega_1^{i-1})=M(\omega_i \mid \omega_{n-\ell+i}^n \omega_1^{i-1})
 \;\;\mbox{and}\;\;
 M(\omega_i \mid \omega_{i-k}^{i-1})=M(\omega_i \mid \omega_{i-\ell}^{i-1}),
\]
since $\ell \geq k$ and $M$ is the $k$th-order Markov kernel.

Substituting the definition of $\hat{R}^{l}_n (\omega_1^n)$, the computation follows as:
\begin{align*}
&-\log \tilde{P}(\omega_1^n) + \log \hat{R}^{\ell}_n (\omega_1^n) \\
&\quad = -\log \tilde{P}(\omega_1^{\ell}) + \log \prod_{i= 1}^{\ell} M(\omega_i \mid \omega_{n-k+i}^n \omega_1^{i-1}) \\ 
&\qquad + \sum_{a_1^{\ell} \in \Omega^{\ell}} \sum_{b \in \Omega} T_n(b \mid a_1^\ell) \left(-\log M(b \mid a_{\ell-k+1}^{\ell}) + \log \hat{M}_n^\ell(b \mid a_1^\ell) \right) \\
&\quad = -\log \tilde{P}(\omega_1^{\ell}) + \log \prod_{i= 1}^{\ell} M(\omega_i \mid \omega_{n-k+i}^n \omega_1^{i-1}) \\
&\qquad + \sum_{a_1^{\ell} \in \Omega^{\ell}} T_n(a_1^{\ell}) \cdot D\left(\left. \hat{M}_n^{\ell}(\;\cdot \mid a_1^{\ell-1}) \right\| M(\;\cdot \mid a_{\ell-k+1}^{\ell}) \right).
\end{align*}
This completes the proof.

\subsection{Stationary distributions of Markov chains} \label{app:stationary}

Given a Markov matrix $M : \Omega\times \Omega \to (0,1): (a, b) \mapsto M(a \mid b)$ satisfying
\[ \sum_{a \in \Omega} M(a \mid b) = 1  \] 
for all $b \in \Omega$,
let $M^{(m)}$ be the $m$th power of $M$, in that $M^{(1)} := M$ and 
\[
M^{(m)}(a\mid b) := \sum_{c\in \Omega} M(a\mid c) M^{(m-1)}(c \mid b).
\]
We recall the following well-known fact.

\begin{lem}\label{lem:mixed_finite_Markov_chain_kth}
There exists a unique probability distribution $\mu$ on $\Omega$ such that for any $a, b \in \Omega$,
\[
\lim_{m \to \infty}M^{(m)}(a \mid b) = \mu(a),
\]
and $\mu$ is the stationary distribution of $M$.
\end{lem}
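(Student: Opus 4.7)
The plan is to combine a fixed-point argument for the existence of a stationary distribution with a Doeblin-type contraction in total variation that delivers both the convergence and the uniqueness. The decisive hypothesis is that every entry $M(a\mid b)$ is strictly positive; in a finite state space this single assumption already encodes irreducibility and aperiodicity and powers exponentially fast convergence from every starting point, so no separate periodicity analysis will be required.

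First, I would establish existence of a stationary distribution. The map $\nu \mapsto M\nu$ given by $(M\nu)(a) := \sum_b M(a \mid b)\,\nu(b)$ is continuous and sends the closed probability simplex on $\Omega$ into itself, so Brouwer's fixed-point theorem supplies some $\mu$ with $M\mu = \mu$. Strict positivity of $M$ then forces $\mu$ itself to have all strictly positive entries, since $\mu = M\mu$ is a positive combination of the columns of $M$.

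Second, I would extract a Doeblin contraction from strict positivity. Set $\epsilon := \min_{a,b} M(a \mid b) > 0$; since each column of $M$ sums to one, $A\epsilon \leq 1$. Writing
\[
M(a \mid b) = A\epsilon \cdot \tfrac{1}{A} + (1 - A\epsilon)\, R(a \mid b),
\]
with $R(a \mid b) := (M(a \mid b) - \epsilon)/(1 - A\epsilon)$ again column-stochastic, peels off a uniform component that cancels out of any difference of two probability distributions (because both sum to one). A short routine calculation then yields the total-variation contraction
\[
\sum_{a \in \Omega} \bigl| (Mp)(a) - (Mq)(a) \bigr| \;\leq\; (1 - A\epsilon) \sum_{b \in \Omega} \bigl| p(b) - q(b) \bigr|
\]
for every pair of probability distributions $p, q$ on $\Omega$.

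Finally, I would iterate. Applying the contraction $m$ times with $p := \delta_b$ and $q := \mu$ gives $\sum_a |M^{(m)}(a \mid b) - \mu(a)| \leq 2(1 - A\epsilon)^m$, which drives $M^{(m)}(a \mid b) \to \mu(a)$ for every $a, b \in \Omega$. Uniqueness drops out at once, since any other stationary $\mu'$ would satisfy $\sum_a |\mu(a) - \mu'(a)| \leq (1 - A\epsilon)^m \sum_a |\mu(a) - \mu'(a)|$ for all $m$, forcing $\mu = \mu'$. I do not anticipate a genuine obstacle here; the only subtlety is the degenerate edge case $A\epsilon = 1$, which forces $M$ to be the uniform kernel and makes the conclusion immediate in a single step, and the small bookkeeping choice of deploying this self-contained contraction rather than invoking Perron--Frobenius as a black box.
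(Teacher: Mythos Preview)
Your argument is correct. Brouwer supplies a stationary $\mu$, the Doeblin decomposition $M(a\mid b)=A\epsilon\cdot\tfrac{1}{A}+(1-A\epsilon)R(a\mid b)$ is legitimate because $\epsilon=\min_{a,b}M(a\mid b)>0$ and each column sums to one, and the uniform piece cancels against any signed measure of total mass zero, yielding the stated $\ell^1$-contraction with factor $1-A\epsilon$. Iterating from $\delta_b$ versus $\mu$ gives the convergence, and the same contraction forces uniqueness. The edge case $A\epsilon=1$ is handled as you say.

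By way of comparison: the paper does not prove this lemma at all but simply cites a standard textbook result (Gallager, \emph{Discrete Stochastic Processes}, Chapter~4, Theorem~6), which is the usual Perron--Frobenius/ergodic-theorem route for finite positive stochastic matrices. Your proof is therefore genuinely different in spirit: it is self-contained and avoids both the spectral machinery of Perron--Frobenius and any separate discussion of irreducibility or aperiodicity, trading these for the single Doeblin minorization step that strict positivity makes available. What your approach buys is transparency and an explicit geometric convergence rate $(1-A\epsilon)^m$; what the citation buys is brevity, since the lemma is entirely classical and the paper only needs the conclusion.
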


\begin{proof}
See Theorem 6 in Chapter 4 of \cite{Gallager}.
\end{proof}

Now, consider a $k$th-order Markov kernel $M : \Omega\times \Omega^k \to (0,1): (a, \omega_1^k) \mapsto M(a \mid \omega_	1^k)$ that satisfies
\[ \sum_{a \in \Omega} M(a \mid \omega_1^k) = 1 \qquad ( \forall \omega_1^k \in \Omega^k).  \]

\begin{lem}\label{lem:kth_unique_stationary_distribution}
For the $k$th-order Markov kernel $M$, there is a unique stationary distribution $\pi : \Omega^k \to (0,1)$ satisfying
\[
\pi(a_2^{k+1}) = \sum_{a_1\in \Omega} M(a_{k+1} \mid a_1^k) \pi(a_1^k).
\]
\end{lem}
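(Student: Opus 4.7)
The plan is to reduce the $k$th-order kernel to a first-order chain on the enlarged state space $\Omega^k$, and then apply Lemma \ref{lem:mixed_finite_Markov_chain_kth} (or rather, its obvious analogue for an alphabet of size $A^k$) to a suitable power of the lifted transition matrix. Concretely, I would first define a first-order Markov matrix $\widetilde{M}$ on $\Omega^k$ by
\[
\widetilde{M}(b_1^k \mid a_1^k) :=
\begin{cases}
M(b_k \mid a_1^k) & \text{if } b_1^{k-1} = a_2^k, \\
0 & \text{otherwise},
\end{cases}
\]
and verify that a probability distribution $\pi$ on $\Omega^k$ satisfies $\pi = \widetilde{M}\pi$ if and only if it satisfies the stationarity relation in the lemma (this is a direct unfolding: the only nonzero contributions in $\sum_{b_1^k}\widetilde{M}(c_1^k \mid b_1^k)\pi(b_1^k)$ come from $b_1^k = a_1 c_1^{k-1}$ with $a_1$ ranging over $\Omega$, after which relabeling yields the stated identity).

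Next, I would show that although $\widetilde{M}$ itself has many zero entries (so Lemma \ref{lem:mixed_finite_Markov_chain_kth} cannot be invoked directly), its $k$th power $\widetilde{M}^{(k)}$ has all strictly positive entries. Indeed, starting from any state $a_1^k \in \Omega^k$ and targeting any state $x_1^k \in \Omega^k$, the unique compatible trajectory of length $k$ passes through the intermediate states $a_2^k x_1,\, a_3^k x_1 x_2,\, \dots,\, x_1^k$, and its probability is
\[
\widetilde{M}^{(k)}(x_1^k \mid a_1^k) \;=\; \prod_{i=0}^{k-1} M\bigl(x_{i+1}\,\big|\, a_{i+1}^k x_1^{i}\bigr) \;>\; 0,
\]
since $M$ takes values in $(0,1)$. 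Thus $\widetilde{M}^{(k)}$ is a strictly positive stochastic matrix on $\Omega^k$, and applying the same argument as in Lemma \ref{lem:mixed_finite_Markov_chain_kth} (i.e., Theorem~6 of Chapter 4 of \cite{Gallager} on the state space $\Omega^k$) produces a unique probability distribution $\pi$ on $\Omega^k$ with $\pi = \widetilde{M}^{(k)}\pi$, together with $\pi(a_1^k) > 0$ for all $a_1^k \in \Omega^k$.

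Finally, I would promote $\pi$ from a fixed point of $\widetilde{M}^{(k)}$ to one of $\widetilde{M}$. The distribution $\widetilde{M}\pi$ is also a stationary distribution of $\widetilde{M}^{(k)}$, because $\widetilde{M}^{(k)}(\widetilde{M}\pi) = \widetilde{M}(\widetilde{M}^{(k)}\pi) = \widetilde{M}\pi$; by the uniqueness established in the previous step, $\widetilde{M}\pi = \pi$. Translating this identity back via the correspondence from the first paragraph yields the relation in the statement, and uniqueness of $\pi$ among stationary distributions of $\widetilde{M}$ follows because any such distribution is automatically stationary for $\widetilde{M}^{(k)}$.

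The only mildly delicate point is the second step: the lifted chain $\widetilde{M}$ is not strictly positive, so one cannot cite Lemma \ref{lem:mixed_finite_Markov_chain_kth} verbatim, and must first pass to the $k$th power to restore full positivity. Once that positivity is in hand, the remainder is a short bookkeeping argument on the enlarged state space.
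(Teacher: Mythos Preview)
Your proposal is correct and follows essentially the same approach as the paper: lift $M$ to a first-order chain $\widetilde{M}$ on $\Omega^k$, observe that $\widetilde{M}^{(k)}$ has all strictly positive entries via the same product formula, apply Lemma~\ref{lem:mixed_finite_Markov_chain_kth} to $\widetilde{M}^{(k)}$, and then use the commutation $\widetilde{M}^{(k)}(\widetilde{M}\pi)=\widetilde{M}(\widetilde{M}^{(k)}\pi)$ together with uniqueness to conclude $\widetilde{M}\pi=\pi$. Your write-up is slightly more explicit than the paper's in spelling out the equivalence between $\widetilde{M}$-stationarity and the displayed relation, and in deducing uniqueness for $\widetilde{M}$ from uniqueness for $\widetilde{M}^{(k)}$, but the argument is otherwise identical.
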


\begin{proof}
Define $\tilde{M}: \Omega^k \times \Omega^k \to [0,1)$ by
\begin{align*}
\tilde{M}(a_1^k \mid b_1^k) := 
\begin{cases}
M(a_k \mid b_1^k) &\ \ (a_1^k = b_2^k a_k), \\
0 &\ \ (\text{otherwise}).
\end{cases}
\end{align*}
Since
\[
\sum_{a_1^k \in \Omega^k} \tilde{M}(a_1^k \mid b_1^k) = \sum_{a_k \in \Omega} \sum_{a_1^{k-1} \in \Omega^{k-1}} \tilde{M}(a_1^k \mid b_1^k) = \sum_{a_k \in \Omega} M(a_k \mid b_1^k) = 1,
\]
we can regard $\tilde{M}$ as a first-order Markov kernel on $\Omega^k$.
Moreover it is straightforward to verify that
\[
 \tilde{M}^{(k)} (a_1^k \mid b_1^k) = \prod_{i=1}^k M(a_i \mid b_i^k a_1^{i-1}).
\]
This expression ensures that $\tilde{M}^{(k)} (a_1^k \mid b_1^k)>0$ for all $a_1^k, b_1^k \in \Omega^k$. 
Thus, by applying Lemma \ref{lem:mixed_finite_Markov_chain_kth} to the Markov matrix $\tilde M^{(k)}$, 
we conclude that there exists a unique distribution $\pi$ on $\Omega^k$ satisfying
\[
\tilde{M}^{(k)} \pi = \pi.
\]
Furthermore, since
\[
\tilde{M}^{(k)} (M \pi) = M (\tilde{M}^{(k)} \pi) = M \pi,
\]
the uniqueness of the stationary distribution for $\tilde{M}^{(k)}$ implies that $M \pi = \pi$.
\end{proof}

\subsection{Lynch-Davisson betting strategy for simple predictive game} \label{app:Lynch-Davisson}

In this appendix, we present an alternative proof of Theorem \ref{SLLN:predictive-Shafer_and_Vovk} using one of the simplest universal data compression schemes \cite{Lynch, Davisson}.
As a by-product, we also analyze the convergence rate of the empirical distribution.

We begin with a binary case and consider describing a binary sequence $x^n=11001$ of length $n=5$.
For $a\in\{0,1\}$, let $S_n(a)$ denote the number of occurrences of $a$ in $x^n$.
The sequence can be identified by first specifying its type (also known as the empirical distribution): 
\[
  \hat P_{x^n}=\left(\frac{S_n(0)}{n}, \frac{S_n(1)}{n}\right)=\left(\frac{2}{5}, \frac{3}{5}\right)
\]
and then specifying the index of this sequence among all sequences of length $n=5$ that share this type. 
Thus, the given binary sequence $x^n$ can be described by another binary sequence as follows: 
\[
 \underbrace{\fbox{ specify the type }}_{\lceil \log_2(n+1) \rceil \;\;\mbox{bits}} 
 \;+\;
 \underbrace{\fbox{ specify the sequence in the type class }}_{\left\lceil \log_2  {\small \begin{pmatrix} n \\ S_n(0), \, S_n(1) \end{pmatrix}} \right\rceil \;\;\mbox{bits} }
\]
This scheme is called the {\it Lynch-Davisson code}, and its codelength is given by
\[
 \ell_{LD}(x^n)=\lceil \log_2(n+1) \rceil + \left\lceil \log_2  \frac{n!}{S_n(0)! \, S_n(1)!}\right\rceil
 =\log_2  \frac{(n+1)!}{S_n(0)! \, S_n(1)!}+O(1). 
\]

Generalizing to a generic alphabet $\Omega=\{1,2,\dots, A\}$ is straightforward, and the corresponding Lynch-Davisson codelength is
\begin{eqnarray*}
 \ell_{LD}(x^n)
 &=&\left\lceil \log_A\frac{(n+A-1)!}{n!(A-1)!} \right\rceil 
 + \left\lceil \log_A\frac{n!}{S_n(1)! \, S_n(2)! \cdots \, S_n(A)!} \right\rceil \\
 &=&\log_A\frac{(n+A-1)!}{(A-1)!\,S_n(1)! \, S_n(2)! \cdots \, S_n(A)!} +O(1).
\end{eqnarray*}
Now, we are ready to prove Theorem \ref{SLLN:predictive-Shafer_and_Vovk}.

\begin{proof}[Proof of Theorem \ref{SLLN:predictive-Shafer_and_Vovk}]
Let us introduce the reference probability measure $P$ on $\Omega^*$ defined by
\[ P(\omega_1^n):=\prod_{i=1}^n p(\omega_i), \]
and consider the ``randomness deficiency'' function $\mathcal{L}_{LD}(\omega_1^n)$ for the Lynch-Davisson codelength $\ell_{LD}(\omega_1^n)$ relative to the Shannon codelength $ -\log_A P(\omega_1^n)$ defined by 
\begin{equation*}
\mathcal{L}_{LD}(\omega_1^n) := -\log_A P(\omega_1^n) - \ell_{LD}(\omega_1^n).
\end{equation*}
A crucial observation is that
\begin{align}
\mathcal{L}_{LD}(\omega_1^n)
&=-\log_A \prod_{i=1}^n p(\omega_i)+\log_A \frac{(A-1)!\, S_n(1)!\, S_n(2)! \cdots \, S_n(A)!}{(n+A-1)!}+O(1) \nonumber \\
&= -\sum_{a\in\Omega} S_n(a) \log_A p(a)
+ \log_A \frac{(A-1)!\, S_n(1)!\, S_n(2)!\, \cdots \, S_n(A)!}{(n+A-1)!}+O(1)  \label{eqn:L0} \\
&= (\log_A e)\left\{ \sum_{a\in\Omega} S_n(a) \log \frac{S_n(a)}{p(a)} - n \log n - O(\log n) \right\}
 \nonumber \\
&= n (\log_A e) \left\{ D( \hat{P}_{\omega_1^n}  \| p ) - O\left(\frac{\log n}{n}\right) \right\},
\label{eqn:L}
\end{align}
where Stirling's formula was used in the third equality.

The relation \eqref{eqn:L} shows that $\limsup_{n\to\infty}\mathcal{L}_{LD}(\omega_1^n)=\infty$ if 
$\hat{P}_{\omega_1^n}$ does not converge to $p$.
It then suffices to show that there exists a prudent betting strategy $\alpha_n$ that realizes
\begin{equation}\label{eqn:Wn_binary}
K_n \propto A^{\mathcal{L}_{LD}(\omega_1^n)} 
= \frac{(A-1)!\,S_n(1)! \, S_n(2)! \cdots \, S_n(A)!}{\prod_{i = 1}^n p(\omega_i) \cdot (n+A-1)!}.
\end{equation}
If this were the case, then
\begin{align}
\frac{K_n}{K_{n-1}}
& = \frac{1}{p(\omega_n) (n+A-1)}\cdot \frac{S_n(1)!\, S_n(2)! \cdots \, S_n(A)!}{S_{n-1}(1)!\, S_{n-1}(2)! \cdots S_{n-1}(A)!} \nonumber \\
& = \frac{1}{p(\omega_n) (n+A-1)}\cdot (S_{n-1}(\omega_n) +1). 
\label{eqn:ratio_binary}
\end{align}
Comparing this with the recursion formula
\[
 K_n := K_{n-1}\left\{ 1+ \sum_{a\in\Omega} \alpha_n(a) \left( \delta_{\omega_n}(a)-p(a)\right) \right\},
\]
we find that
\begin{equation}\label{eqn:alpha_binary}
\alpha_n(a) := \frac{S_{n-1}(a) + 1}{p(a)(n + A-1)}
\end{equation}
gives a desired prudent betting strategy that satisfies \eqref{eqn:Wn_binary}. In fact, since
\[
 \sum_{a\in\Omega}\alpha_n (a) p(a)=
 \frac{1}{n+A-1} \sum_{a\in\Omega} \left\{ S_{n-1}(a)+1 \right\}
 =\frac{1}{n+A-1} \left\{ (n-1)+A \right\}
 =1,
\]
we have
\[
 1 +  \sum_{a \in \Omega} \alpha_n (a) (\delta_{\omega_n} (a) - p(a))
 = \sum_{a \in \Omega} \alpha_n (a) \delta_{\omega_n} (a) 
 = \alpha_n (\omega_n),
\]
which is identical to the right-hand side of \eqref{eqn:ratio_binary}. 

In summary, the prudent betting strategy \eqref{eqn:alpha_binary} ensures that 
\begin{equation*}
 \limsup_{n\to\infty} \log_A K_n = \limsup_{n\to\infty} \mathcal{L}_{LD}(\omega_1^n) =\infty
\end{equation*}
if $\hat P_{\omega_1^n}$ does not converge to $p$ as $n \to \infty$.
The proof is complete.
\end{proof}

\begin{rem}\label{rem:speed_of_convergence}
In Theorem \ref{SLLN:predictive-Shafer_and_Vovk}, the two events $K_n\to\infty$ and $\hat P_{\omega_1^n}\to p$ are not necessarily mutually exclusive, and both may occur simultaneously%
\footnote{
A similar argument is found in \cite{Kumon-Takemura-Takeuchi}.}.
For example, suppose that $\hat P_{\omega_1^n}$ converges to $p$ at the rate
\[ \|\hat P_{\omega_1^n}-p \|=O(\sqrt{{\log n}/{n}}\,) \]
and satisfies
\begin{equation*}
  \limsup_{n\to\infty} \frac{n}{\log n}\sum_{a\in\Omega} \frac{(\hat P_{\omega_1^n}(a)-p(a))^2}{p(a)}>A-1.
\end{equation*}
Then, we have
\[  \limsup_{n\to\infty} K_n=\infty. \]
\end{rem}

\begin{proof}
Applying Stirling's formula
\[
\log n! = \left(n + \frac{1}{2} \right)\log n - n + O(1), 
\]
we get
\begin{align*}
& \log\frac{S_n(1)!\, S_n(2)! \cdots \, S_n(A)!}{(n+A-1)!} \\
&\qquad =  \sum_{a \in \Omega} \left\{ \left( S_n(a) + \frac{1}{2} \right)\log S_n(a) - S_n(a)  \right\}\\
&\qquad\qquad - \left\{ \left( (n +A- 1) + \frac{1}{2} \right) \log(n +A- 1) - (n +A- 1) \right\}  + O(1)\\
&\qquad= \sum_{a \in \Omega} S_n(a)\log S_n(a) + \frac{1}{2}\sum_{a \in \Omega} \log S_n(a) 
 - \left(n+A-\frac{1}{2}\right) \log (n+1) + O(1).
\end{align*}
Thus, the randomness deficiency function $\mathcal{L}_{LD}(\omega_1^n)$ is evaluated using \eqref{eqn:L0} as
\begin{align}
\frac{\mathcal{L}_{LD}(\omega_1^n)}{n \log_A e}
&= - \sum_{a \in \Omega} \frac{S_n(a)}{n} \log p(a)
+ \frac{1}{n} \log \frac{S_n(1)!\, S_n(2)! \cdots \, S_n(A)!}{(n+A-1)!} + O\left(\frac{1}{n} \right) \nonumber \\
&=\sum_{a\in\Omega} \frac{S_n(a)}{n} (-\log p(a) +  \log S_n(a)) 
	+\frac{1}{2n}\sum_{a \in \Omega} \log S_n(a)  \nonumber \\
&\qquad  -\frac{1}{n}\left(n+A-\frac{1}{2}\right) \log(n+A-1)+ O\left(\frac{1}{n} \right) \nonumber \\
&= \sum_{a \in \Omega} \hat P_{\omega_1^n}(a) \left( \log \frac{\hat P_{\omega_1^n}(a)}{p(a)} + \log n \right) 
+ \frac{1}{2n}\sum_{a \in \Omega}  \left( \log \hat P_{\omega_1^n}(a) + \log n \right) \nonumber \\
&\qquad  -\left(1+\frac{A}{n}-\frac{1}{2n}\right) \log(n+A-1)+ O\left(\frac{1}{n} \right) \nonumber \\
&= \sum_{a \in \Omega} \hat P_{\omega_1^n}(a) \log \frac{\hat P_{\omega_1^n}(a)}{p(a)} 
+ \frac{1}{2n}\sum_{a \in \Omega} \log \hat P_{\omega_1^n}(a) - \frac{A-1}{2n} \log n + O\left(\frac{1}{n} \right). \qquad \ \ 
 \label{eqn:used_Stirling} 
\end{align} 
Letting $Q_n(a) := \hat P_{\omega_1^n}(a) -p(a)$, we evaluate the first term of \eqref{eqn:used_Stirling} as
\begin{align}
\sum_{a \in \Omega} \hat P_{\omega_1^n}(a) \log \frac{\hat P_{\omega_1^n}(a)}{p(a)}
&= \sum_{a \in \Omega} p(a) \left( 1 + \frac{Q_n(a)}{p(a)} \right) \log \left( 1 + \frac{Q_n(a)}{p(a)} \right) \nonumber \\
&= \sum_{a \in \Omega} p(a) \left\{ \frac{Q_n(a)}{p(a)} + \frac{1}{2} \left(\frac{Q_n(a)}{p(a)}\right)^2 
	- O\left(\frac{Q_n(a)}{p(a)}\right)^3 \right\} \nonumber \\
&= \label{eqn:used_Taylor} 0 + \frac{1}{2} \sum_{a \in \Omega} \frac{Q_n(a)^2}{p(a)} 
	+ O\left(|Q_n|^3 \right).
\end{align}
Combining \eqref{eqn:used_Stirling} and \eqref{eqn:used_Taylor}, we have
\begin{equation}
\frac{\mathcal{L}_{LD}(\omega_1^n)}{n \log_A e}
= \frac{1}{2} \sum_{a \in \Omega} \frac{Q_n(a)^2}{p(a)} -\frac{A-1}{2n} \log n + \frac{1}{2n} \sum_{a \in \Omega} \log \hat P_{\omega_1^n}(a) 
+O\left(|Q_n|^3\right) + O\left(\frac{1}{n} \right),  \nonumber
\end{equation}
and thus
\begin{align}
\frac{\mathcal{L}_{LD}(\omega_1^n)}{\log_A e}
&= \frac{\log n}{2} \left[ \frac{n}{\log n} \sum_{a \in \Omega} \frac{Q_n(a)^2}{p(a)} -(A-1) \right]+ \sum_{a \in \Omega} \log \hat P_{\omega_1^n}(a)  \nonumber \\
&\qquad +n O\left(|Q_n|^3\right) + O(1).
\label{eqn:asymptoticL}
\end{align}
Now, by the assumption that $|Q_n|=|\hat P_{\omega_1^n}-p |=O(\sqrt{{\log n}/{n}}\,)$, we have
\[
 \sum_{a \in \Omega} \log\hat P_{\omega_1^n}(a) = O(1)
 \quad\mbox{and}\quad 
 n O\left(|Q_n|^3\right)\to 0.
\]
It then follows from \eqref{eqn:asymptoticL} that
\[
\limsup_{n\to\infty} \frac{n}{\log n}\sum_{a\in\Omega} \frac{(\hat P_{\omega_1^n}(a)-p(a))^2}{p(a)}>A-1, 
\]
implies $\limsup_{n\to\infty} \mathcal{L}_{LD}(\omega_1^n)=\infty$. 
This completes the proof. 
\end{proof}

Note that the quantity
\[ \sum_{a\in\Omega} \frac{(\hat P_{\omega_1^n}(a)-p(a))^2}{p(a)} \]
corresponds to the Fisher information.


\end{document}